\theoremstyle{definition}
\newtheorem{definition}{Definition}[section]
\theoremstyle{plain}
\newtheorem{theorem}[definition]{Theorem}
\newtheorem{lemma}[definition]{Lemma}
\newtheorem{proposition}[definition]{Proposition}
\newtheorem{corollary}[definition]{Corollary}
\theoremstyle{remark}
\newtheorem{example}[definition]{Example}
\newtheoremstyle{customtheorem}
    {\topsep}
    {\topsep}
    {\itshape}
    {}
    {\bfseries}
    {.}
    { }
    {\thmnote{#3}}
\theoremstyle{customtheorem}
\definecolor{shadecolor}{rgb}{0.85, 0.85, 0.85}
\newcommand{\displaypunctuation}[1]{{\makebox[0pt]{\;#1}}}
\newcommand{\Iff}{\mathrel{\iff}}
\newcommand{\set}[1]{{\{{#1}\}}}
\newcommand{\Set}[2]{{\{{#1}\mathrel{\vert}{\textnormal{#2}}\}}}
\newcommand{\pair}[2]{{\langle{#1},{#2}\rangle}}
\newcommand{\sequence}[1]{{\langle{#1}\rangle}}
\newcommand{\Sequence}[2]{{\langle{#1}\mathrel{\vert}{\textnormal{#2}}\rangle}}
\newcommand{\union}{\cup}
\newcommand{\Intersection}{\mathop{\textstyle\bigcap}}
\DeclareMathOperator{\domain}{\mathsf{dom}}
\newcommand{\naturals}{\mathbb{N}}
\newcommand{\reals}{\mathbb{R}}
\renewcommand{\to}{\rightarrow}
\newcommand{\meet}{\sqcap}
\DeclareMathOperator{\oh}{\mathsf{oh}}
\newbox\transitionbox
\newcommand{\transition}[1][]{                          
    \setbox\transitionbox=\hbox{\ensuremath{\scriptstyle #1}}
    \pgfmathsetlengthmacro{\transitionlength}{max(0.925em, \wd\transitionbox) + .65em}
    \mathrel{
        \begin{tikzpicture}[baseline = -.57ex]

            \draw[line cap = round, ->, shorten < = .08em, shorten > = .045em] 
                (0,0) --    node[inner sep = 0pt, minimum height = 0pt, minimum width = 0pt, above = .49ex]   {\box\transitionbox}    (\transitionlength, 0);

        \end{tikzpicture}
    }
}
\newcommand{\notransition}[1][]{                        
    \setbox\transitionbox=\hbox{\ensuremath{\scriptstyle #1}}
    \pgfmathsetlengthmacro{\transitionlength}{max(.925em, \wd\transitionbox) + .65em}
    \mathrel{
        \begin{tikzpicture}[baseline = -.57ex]
        
            \draw[line cap = round, ->, shorten < = .08em, shorten > = .045em] 
                (0,0) --    node[inner sep = 0pt, minimum height = 0pt, minimum width = 0pt, above = .49ex]   {\box\transitionbox}    (\transitionlength, 0);
                
            \draw[line cap = round]
                (\transitionlength / 2 - .11em, -.54ex) -- (\transitionlength / 2 + .11em, .54ex);
                
        \end{tikzpicture}
    }
}
\newcommand{\tagset}{{\mathrm{T}}}
\DeclareMathOperator{\fix}{\mathsf{fix}}
\DeclareMathOperator{\1m2}{\mathsf{1m2}}
\newcommand{\sortA}{\mathrm{A}}
\newcommand{\sortD}{\mathrm{D}}
\newcommand{\meetsymbol}{\boldsymbol{\meet}}
\newcommand{\sqsubseteqsymbol}{\boldsymbol{\sqsubseteq}}
\newcommand{\sqsubsetsymbol}{\boldsymbol{\sqsubset}}
\newcommand{\LUBsymbol}{\boldsymbol{\mathop{\textstyle\bigsqcup}\nolimits}}
\newcommand{\leqsymbol}{\boldsymbol{\leq}}
\newcommand{\lsymbol}{\boldsymbol{<}}
\newcommand{\zerosymbol}{\boldsymbol{0}}
\newcommand{\dsymbol}{\boldsymbol{\mathrm{d}}}
\newcommand{\structureA}{\mathfrak{A}}
\newcommand{\structureS}{\mathfrak{S}}
\newcommand{\carrier}[1]{{\mathopen{|}#1\mathclose{|}}}
\newcommand{\distance}[3]{{\dsymbol^{#1}(#2, #3)}}
\newcommand{\OmTsymbol}{\boldsymbol{\1m2}}
\title{
    The Fixed-Point Theory of Strictly Contracting Functions on Generalized Ultrametric Semilattices\thanks{
        This work was supported in part by the Center for Hybrid and Embedded Software Systems (CHESS) at UC Berkeley, 
        which receives support from the National Science Foundation (NSF awards \#0720882 (CSR-EHS: PRET), \#0931843 (CPS: Large: ActionWebs), and \#1035672 (CPS: Medium: Ptides)), 
        the Naval Research Laboratory (NRL \#N0013-12-1-G015), 
        and the following companies: 
        Bosch, National Instruments, and Toyota.
    }
}
\author{
    Eleftherios Matsikoudis
    \institute{University of California, Berkeley}
    \email{ematsi@eecs.berkeley.edu}
    \and
    Edward A. Lee
    \institute{University of California, Berkeley}
    \email{eal@eecs.berkeley.edu}
}
\begin{document}


\maketitle


\begin{abstract}
    We introduce a new class of abstract structures,
    which we call generalized ultrametric semilattices,
    and in which the meet operation of the semilattice coexists with a generalized distance function in a tightly coordinated way.
    We prove a constructive fixed-point theorem for strictly contracting functions on directed-complete generalized ultrametric semilattices,
    and introduce a corresponding induction principle.
    We cite examples of application in the semantics of logic programming and timed computation,
    where, 
    until now, 
    the only tool available has been the non-constructive fixed-point theorem of Priess-Crampe and Ribenboim for strictly contracting functions on spherically complete generalized ultrametric semilattices.
\end{abstract}


\section{Introduction} \label{sec:introduction}

Fixed-point semantics in computer science has almost invariably been based on the fixed-point theory of order-preserving functions on ordered sets, 
or that of contraction mappings on metric spaces. 
More recently, 
however,
there have been instances of fixed-point problems involving strictly contracting functions on generalized ultrametric spaces, 
such as in the semantics of logic programming 
(e.g., see \cite{Hitzler2003GMAUDLP}, \cite{Priess-Crampe2000USALP}), 
or the study of timed systems 
(e.g., see \cite{Naundorf2000SCFHAUFP}, \cite{Liu2006MTCS}), 
that are not amenable to classical methods (see \cite[thm.\,A.2 and thm.\,A.4]{Matsikoudis2013TFTOSCF}). 
Until recently, 
the only tool available for dealing with such problems was a non-constructive fixed-point theorem of Priess-Crampe and Ribenboim (see \cite{Priess-Crampe1993FPCAGPS}). 
But in \cite{Matsikoudis2013TFTOSCF}, 
a constructive theorem was obtained, 
tailored to the general form in which these problems typically appear in computer science,  
also delivering an induction principle for proving properties of the constructed fixed-points. 
What is interesting is that the proof of that theorem involved,
not just the generalized ultrametric structure of the spaces of interest,
but also a natural, inherent ordering of these spaces,
and more importantly,
the interplay between the two,
which was distilled in two simple properties of the following form:
\begin{enumerate}

    \item
        if $d(x_1, x_2) \leq d(x_1, x_3)$,
        then $x_1 \meet x_3 \sqsubseteq x_1 \meet x_2 \enspace$;

    \item
        $d(x_1 \meet x_2, x_1 \meet x_3) \leq d(x_2, x_3)$.

\end{enumerate}
As it turns out, 
these two simple properties imply all formal properties of the relationship between the generalized distance function and the order relation in those spaces 
(see \cite{Matsikoudis2013AAOTTOGUSOLS}).

The purpose of this work is to formulate the fixed-point theory of \cite{Matsikoudis2013TFTOSCF} as an abstract theory that can be readily applied to different fields and problems,  
such as the question of meaning of logic programs or the study of feedback in timed systems.
To this end, 
we introduce a new class of abstract structures, 
which we call \emph{generalized ultrametric semilattices}, 
prove a constructive fixed-point theorem of strictly contracting functions on directed-complete generalized ultrametric semilattices,
and introduce a corresponding induction principle.

\section{Generalized Ultrametric Semilattices} \label{sec:generalized_ultrametric_semilattices}

We assume that the reader is familiar with the concept of many-sorted signature,
which is, 
of course, 
a straightforward generalization of that in the one-sorted case 
(e.g., see \cite[chap.\,1.1]{Hodges1993MT}).

We write $\Sigma$ for a two-sorted signature consisting of two sorts $\sortA$ and $\sortD$, 
and the following symbols:
\begin{enumerate}
    
    \item
        an infix function symbol $\meetsymbol$ of type $\sortA \boldsymbol{\times} \sortA \boldsymbol{\to} \sortA$;
        
    \item
        an infix relation symbol $\leqsymbol$ of type $\sortD \boldsymbol{\times} \sortD$;
         
    \item
        a constant symbol $\zerosymbol$ of type $\boldsymbol{1} \boldsymbol{\to}\sortD$;
        
    \item
        a function symbol $\dsymbol$ of type $\sortA \boldsymbol{\times} \sortA \boldsymbol{\to} \sortD$.
        
\end{enumerate} 

\begin{definition}
    A $\Sigma$-structure is a function $\structureA$ from the set of sorts and symbols of $\Sigma$ 
    such that $\structureA(\sortA)$ and $\structureA(\sortD)$ are non-empty sets, 
    and the following are true:
    \begin{enumerate}
        
        \item
            $\structureA(\meetsymbol)$ is a function from $\structureA(\sortA) \times \structureA(\sortA)$ to $\structureA(\sortA)$;
            
        \item
            $\structureA(\leqsymbol)$ is a subset of $\structureA(\sortD) \times \structureA(\sortD)$;
            
        \item
            $\structureA(\zerosymbol)$ is a member of $\structureA(\sortD)$;
            
        \item
            $\structureA(\dsymbol)$ is a function from $\structureA(\sortA) \times \structureA(\sortA)$ to $\structureA(\sortD)$.
            
    \end{enumerate} 
\end{definition}

Assume a $\Sigma$-structure $\structureA$.

We write $\carrier{\structureA}_\sortA$ for $\structureA(\sortA)$, 
$\carrier{\structureA}_\sortD$ for $\structureA(\sortD)$, 
$\meetsymbol^\structureA$ for $\structureA(\meetsymbol)$,
$\leqsymbol^\structureA$ for $\structureA(\leqsymbol)$,
$\zerosymbol^\structureA$ for $\structureA(\zerosymbol)$, and
$\dsymbol^\structureA$ for $\structureA(\dsymbol)$.

We call $\carrier{\structureA}_\sortA$ the \emph{carrier} of $\structureA$ of sort $\sortA$, 
or the \emph{abstract set} of $\structureA$,
and $\carrier{\structureA}_\sortD$ the \emph{carrier} of $\structureA$ of sort $\sortD$, 
or the \emph{distance set} of $\structureA$.

It is, 
of course, 
possible to define concepts of homomorphism, substructure, etc., for $\Sigma$-structures as instances of the standard concepts homomorphism, substructure, etc., for many-sorted structures,
which are,
of course,
straightforward generalizations of those for one-sorted structures
(e.g., see \cite[chap.\,1.2]{Hodges1993MT})
(see \cite{Matsikoudis2013AAOTTOGUSOLS}).

The $\Sigma$-structures that we are interested in are those in which the function assigned to $\meetsymbol$ behaves as the meet operation of a semilattice, 
the function assigned to $\dsymbol$ as the generalized distance function of a generalized ultrametric space, 
and the two satisfy a couple of simple properties.

\begin{definition} \label{def:generalized_ultrametric_semilattice}
    A \emph{generalized ultrametric semilattice} is a $\Sigma$-structure $\structureA$ 
    such that the following are true:
    \begin{enumerate}

        \item \label{enum:generalized_ultrametric_semilattice_1}
            $\pair{\carrier{\structureA}_\sortA}{\meetsymbol^\structureA}$ is a semilattice\footnote{
                \label{foot:semilattice}
                For every set $S$, 
                and every binary operation $\meet$ on $S$, 
                $\pair{S}{\meet}$ is a \emph{semilattice} 
                if and only if 
                for any $s_1, s_2, s_3 \in S$,
                the following are true:
                \begin{enumerate}
                    
                    \item
                        $(s_1 \meet s_2) \meet s_3 = s_1 \meet (s_2 \meet s_3)$;
                        
                    \item
                        $s_1 \meet s_2 = s_2 \meet s_1$;
                    
                    \item
                        $s_1 \meet s_1 = s_1$.
                    
                \end{enumerate}
            };

        \item \label{enum:generalized_ultrametric_semilattice_2}
            $\sequence{\carrier{\structureA}_\sortD, \leqsymbol^\structureA, \zerosymbol^\structureA}$ is a pointed\footnote{
                An ordered set\footnotemark{} is \emph{pointed} 
                if and only if 
                it has a least element.
                We write $\sequence{P, \leqslant, 0}$ for a pointed ordered set $\pair{P}{\leqslant}$ with least element $0$.
            } ordered set;

        \footnotetext{
            An \emph{ordered set} is an ordered pair $\pair{P}{\leqslant}$
            such that $P$ is a set, 
            and $\leqslant$ is a reflexive, transitive, and antisymmetric binary relation on $P$.
        }

        \item \label{enum:generalized_ultrametric_semilattice_3}
            $\sequence{\carrier{\structureA}_\sortA, \carrier{\structureA}_\sortD, \leqsymbol^\structureA, \zerosymbol^\structureA, \dsymbol^\structureA}$ is a generalized ultrametric space\footnote{
                A \emph{generalized ultrametric space} is a quintuple $\sequence{A, P, \leqslant, 0, d}$ 
                such that $A$ is a set, 
                $\sequence{P, \leqslant, 0}$ is a pointed ordered set, 
                $d$ is a function from $A \times A$ to $P$,
                and for any $a_1, a_2, a_3 \in A$ 
                and every $p \in P$,
                the following are true:
                \begin{enumerate}
            
                    \item \label{enum:generalized_ultrametric_1}
                        $d(a_1, a_2) = 0$ 
                        if and only if 
                        $a_1 = a_2$;
            
                    \item \label{enum:generalized_ultrametric_2}
                        $d(a_1, a_2) = d(a_2, a_1)$;
            
                    \item \label{enum:generalized_ultrametric_3}
                        if $d(a_1, a_2) \leqslant p$ and $d(a_2, a_3) \leqslant p$,
                        then $d(a_1, a_3) \leqslant p$.
                        
                \end{enumerate}
                We refer to clause \ref{enum:generalized_ultrametric_1} as the \emph{identity of indiscernibles},
                clause \ref{enum:generalized_ultrametric_2} as \emph{symmetry},
                and clause \ref{enum:generalized_ultrametric_3} as the \emph{generalized ultrametric inequality}.
            };
        
        \item \label{enum:generalized_ultrametric_semilattice_4}
            for every $a_1, a_2, a_3 \in \carrier{\structureA}_\sortA$, 
            the following are true:
            \begin{enumerate}
                
                \item \label{enum:generalized_ultrametric_semilattice_4_1}
                    if $\dsymbol^\structureA(a_1, a_2) \leqsymbol^\structureA \dsymbol^\structureA(a_1, a_3)$, 
                    then $(a_1 \meetsymbol^\structureA a_3) \meetsymbol^\structureA (a_1 \meetsymbol^\structureA a_2) = a_1 \meetsymbol^\structureA a_3$;
                
                \item \label{enum:generalized_ultrametric_semilattice_4_2}
                    $\dsymbol^\structureA(a_1 \meetsymbol^\structureA a_2, a_1 \meetsymbol^\structureA a_3) \leqsymbol^\structureA \dsymbol^\structureA(a_2, a_3)$.
                    
            \end{enumerate}

    \end{enumerate}
\end{definition}

Notice that, 
in Definition~\ref{def:generalized_ultrametric_semilattice}.\ref{enum:generalized_ultrametric_semilattice_1}, 
a semilattice is viewed as an algebraic structure. 
For the most part, 
it will be more convenient to view a semilattice as an ordered set.\footnote{
    An ordered set $\pair{P}{\leqslant}$ is a \emph{semilattice} 
    (also called a \emph{meet-semilattice} or a {lower semilattice})
    if and only if 
    for any $p_1, p_2 \in P$, 
    there is a greatest lower bound (also called a meet) of $p_1$ and $p_2$ in $\pair{P}{\leqslant}$.
}
The two views are closely connected, 
and one may seamlessly switch between them
(e.g., see \cite[lem.\,2.8]{Davey2002ITLAO}).
Formally, 
it is simpler to work with a meet operation than with an order relation
(see \cite{Matsikoudis2013AAOTTOGUSOLS}).
But informally, 
we will recover the order relation from the meet operation, 
and for every $a_1, a_2 \in \carrier{\structureA}_\sortA$, 
write $a_1 \sqsubseteqsymbol^\structureA a_2$ 
if and only if 
$a_1 \meetsymbol^\structureA a_2 = a_1$.
In particular, 
we may rewrite Definition~\ref{def:generalized_ultrametric_semilattice}.\ref{enum:generalized_ultrametric_semilattice_4} in the following form:
\begin{enumerate}
    \setcounter{enumi}{3}
    
    \item 
        for every $a_1, a_2, a_3 \in \carrier{\structureA}_\sortA$, 
        the following are true:
        \begin{enumerate}
            
            \item 
                if $\dsymbol^\structureA(a_1, a_2) \leqsymbol^\structureA \dsymbol^\structureA(a_1, a_3)$, 
                then $a_1 \meetsymbol^\structureA a_3 \sqsubseteqsymbol^\structureA a_1 \meetsymbol^\structureA a_2$;
            
            \item 
                $\dsymbol^\structureA(a_1 \meetsymbol^\structureA a_2, a_1 \meetsymbol^\structureA a_3) \leqsymbol^\structureA \dsymbol^\structureA(a_2, a_3)$.
                
        \end{enumerate}

\end{enumerate}
Of course, 
all this can be done formally,
but we shall not worry ourselves over the details.

For notational convenience, 
we will informally write $\sqsubsetsymbol^\structureA$ for the irreflexive part of $\sqsubseteqsymbol^\structureA$, 
and $\lsymbol^\structureA$ for the irreflexive part of $\leqsymbol^\structureA$.

Assume a generalized ultrametric semilattice $\structureA$.

We say that $\structureA$ is \emph{directed-complete} 
if and only if 
$\pair{\carrier{\structureA}}{\sqsubseteqsymbol^\structureA}$ is directed-complete\footnote{
    An ordered set $\pair{P}{\leqslant}$ is \emph{directed-complete} 
    if and only if 
    every subset of $P$ that is directed\footnotemark{} in $\pair{P}{\leqslant}$ has a least upper bound in $\pair{P}{\leqslant}$.
}\footnotetext{
    For every ordered set $\pair{P}{\leqslant}$, 
    and every $D \subseteq P$,
    $D$ is \emph{directed} in $\pair{P}{\leqslant}$
    if and only if 
    $D \neq \emptyset$, 
    and every finite subset of $D$ has an upper bound in $\pair{D}{\leqslant_D}$, 
    where $\leqslant_D$ is the restriction of $\leqslant$ to $D$.
}. 

If $\structureA$ is directed-complete, 
then for every $D \subseteq \carrier{\structureA}_\sortA$ that is directed in $\pair{\carrier{\structureA}_\sortA}{\sqsubseteqsymbol^\structureA}$, 
we write $\LUBsymbol^\structureA D$ for the least upper bound of $D$ in $\pair{\carrier{\structureA}_\sortA}{\sqsubseteqsymbol^\structureA}$.

We say that $\structureA$ is \emph{spherically complete} 
if and only if 
$\sequence{\carrier{\structureA}_\sortA, \carrier{\structureA}_\sortD, \leqsymbol^\structureA, \zerosymbol^\structureA, \dsymbol^\structureA}$ is spherically complete\footnote{
    A generalized ultrametric space $\sequence{A, P, \leqslant, 0, d}$ is \emph{spherically complete} 
    if and only if 
    for every non-empty chain $C$ of balls\footnotemark{} in $\sequence{A, P, \leqslant, 0, d}$, 
    $\Intersection C \neq \emptyset$.
}\footnotetext{
    For every generalized ultrametric space $\sequence{A, P, \leqslant, 0, d}$,
    and every $B \subseteq A$,
    $B$ is a \emph{ball} in $\sequence{A, P, \leqslant, 0, d}$
    if and only if 
    there is $a \in A$ and $p \in P$ 
    such that $B = \Set{a' \in A}{$d(a', a) \leqslant p$}$.
}.

The paradigmatic example of a generalized ultrametric semilattice is the standard generalized ultrametric semilattice $\structureS[\pair{T}{\leq_T}, V]$ of all linear signals from some totally ordered set $\pair{T}{\leq_T}$ to some non-empty set $V$ 
(see \cite{Matsikoudis2013AAOTTOGUSOLS}). 
Indeed, 
the definition of generalized ultrametric semilattices was motivated by the fact that every generalized ultrametric semilattice with a totally ordered distance set is isomorphic to a standard generalized ultrametric semilattice of linear signals 
(see \cite[thm.\,2]{Matsikoudis2013AAOTTOGUSOLS}).

An example of a non-standard generalized ultrametric semilattice of linear signals is the set of all finite and infinite sequences over some non-empty set of values, 
equipped with the standard prefix relation and the so-called ``Baire-distance function'' 
(e.g., see \cite{Bakker1998DMFPLAOBFPT}). 

\begin{example} \label{exam:finite_and_infinite_sequences}
    Let $V$ be a non-empty set. 
    
    Let $\structureA$ be a $\Sigma$-structure 
    such that $\carrier{\structureA}_\sortA$ is the set of all finite and infinite sequences over $V$, 
    $\carrier{\structureA}_\sortD = \reals_{\geq 0}$,\footnote{
        We write $\reals_{\geq 0}$ for the set of all non-negative real numbers. 
    } 
    and the following are true:
    \begin{enumerate}
        
        \item
            ${\meetsymbol^\structureA}$ is a binary operation on $\carrier{\structureA}_\sortA$ 
            such that for every $s_1, s_2 \in \carrier{\structureA}_\sortA$, 
            $s_1 \meetsymbol^\structureA s_2$ is the greatest common prefix of $s_1$ and $s_2$;
                        
        \item
            ${\leqsymbol^\structureA}$ is the standard order on $\reals_{\geq 0}$; 
        
        \item
            $\zerosymbol^\structureA = 0$;
            
        \item
            $\dsymbol^\structureA$ is a function from $\carrier{\structureA}_\sortA \times \carrier{\structureA}_\sortA$ to $\carrier{\structureA}_\sortD$ 
            such that for every $s_1, s_2 \in \carrier{\structureA}_\sortA$, 
            \[
                \dsymbol^\structureA(s_1, s_2) = 
                \begin{cases}
                    0                                                                       &\text{if $s_1 = s_2$;} \\
                    2^{- \min \Set{n}{$n \in \naturals$ and $s_1(n) \not\simeq s_2(n)$}}    &\text{otherwise.\footnotemark{}}
                \end{cases}
            \]
        
    \end{enumerate}
    
    \footnotetext{
        We write $\naturals$ for the set of all natural numbers, 
        and $\leq_\naturals$ for the standard order on $\naturals$.
    }
    
    It is easy to verify that $\structureA$ is a directed-complete and spherically complete generalized ultrametric semilattice.
\end{example}

Notice that the generalized ultrametric space associated with the generalized ultrametric semilattice $\structureA$ of Example~\ref{exam:finite_and_infinite_sequences} is a standard ultrametric space. 
In such a case, 
we may omit the term ``generalized'', 
and speak simply of an \emph{ultrametric semilattice}.

Another example of a non-standard ultrametric semilattice of linear signals, 
one that is of particular interest to the study of timed computation, 
is the set of all discrete-event\footnote{
    A signal $s$ from $\pair{T}{\leq_T}$ to $V$ is \emph{discrete-event} 
    if and only if 
    there is an order-embedding of $\pair{\domain s}{\leq_{\domain s}}$ into $\pair{\naturals}{\leq_\naturals}$, 
    where $\leq_{\domain s}$ is the restriction of $\leq_T$ to $\domain s$.
} real-time signals over some non-empty set of values, 
equipped with the standard prefix relation
and the so-called ``Cantor metric'' 
(e.g., see \cite{Lee1998AFFCMOC}, \cite{Lee1999MCRPUDE}). 

\begin{example} \label{exam:discrete-event_real-time_signals}
    Let $V$ be a non-empty set. 
    
    Let $\structureA$ be a $\Sigma$-structure 
    such that $\carrier{\structureA}_\sortA$ is the set of all discrete-event signals from $\pair{\reals}{\leq_\reals}$ to $V$,\footnote{
        We write $\reals$ for the set of all real numbers, 
        and $\leq_\reals$ for the standard order on $\reals$.
    } 
    $\carrier{\structureA}_\sortD = \reals_{\geq 0}$, 
    and the following are true:
    \begin{enumerate}
        
        \item
            ${\meetsymbol^\structureA}$ is a binary operation on $\carrier{\structureA}_\sortA$ 
            such that for every $s_1, s_2 \in \carrier{\structureA}_\sortA$, 
            $s_1 \meetsymbol^\structureA s_2$ is the greatest common prefix of $s_1$ and $s_2$;
            
        \item
            ${\leqsymbol^\structureA}$ is the standard order on $\reals_{\geq 0}$; 
        
        \item
            $\zerosymbol^\structureA = 0$;
            
        \item
            $\dsymbol^\structureA$ is a function from $\carrier{\structureA}_\sortA \times \carrier{\structureA}_\sortA$ to $\carrier{\structureA}_\sortD$ 
            such that for every $s_1, s_2 \in \carrier{\structureA}_\sortA$, 
            \[
                \dsymbol^\structureA(s_1, s_2) = 
                \begin{cases}
                    0                                                                   &\text{if $s_1 = s_2$;} \\
                    2^{- \min \Set{r}{$r \in \reals$ and $s_1(r) \not\simeq s_2(r)$}}   &\text{otherwise.}
                \end{cases}
            \]
        
    \end{enumerate}

    Notice that since the domain of every signal in $\carrier{\structureA}_\sortA$ is well ordered by $\leq_\reals$, 
    for every $s_1, s_2 \in \carrier{\structureA}_\sortA$, 
    $\Set{r}{$r \in \reals$ and $s_1(r) \not\simeq s_2(r)$}$ is also well ordered by $\leq_\reals$, 
    and thus, 
    $\min \Set{r}{$r \in \reals$ and $s_1(r) \not\simeq s_2(r)$}$ is well defined.
    
    It is easy to verify that $\structureA$ is a directed-complete and spherically complete ultrametric semilattice.
\end{example}

Finally, 
we include an example from the field of logic programming. 
We assume familiarity with the basic concepts of logic programming 
(e.g., see \cite{Lloyd1987FOLP}). 
Our notation is based on \cite{Hitzler2003GMAUDLP}. 

\begin{example} \label{exam:Herbrand_interpretations}
    Let $P$ be a normal logic program. 
    
    Let $\alpha$ be a non-empty countable ordinal, 
    and $l$ a function from $\mathrm{H}_P$, 
    the Herbrand base of $P$, 
    to $\alpha$. 
    
    Let $\structureA$ be a $\Sigma$-structure 
    such that $\carrier{\structureA}_\sortA$ is the set of all subsets of $\mathrm{H}_P$, 
    $\carrier{\structureA}_\sortD = \alpha \union \set{\alpha}$, 
    and the following are true:
    \begin{enumerate}
        
        \item
            ${\meetsymbol^\structureA}$ is a binary operation on $\carrier{\structureA}_\sortA$ 
            such that for every $I_1, I_2 \in \carrier{\structureA}_\sortA$, 
            \[
                I_1 \meetsymbol^\structureA I_2 = \{A \mathrel{\vert} \pbox[t]{\textwidth}{
                    $A \in I_1$, 
                    $A \in I_2$, 
                    and for every $A'$ 
                    such that $l(A') \in l(A)$ or $l(A') = l(A)$, 
                    $A' \in I_1$ \\if and only if $A' \in I_2$\};
                }
            \]
            
        \item
            ${\leqsymbol^\structureA}$ is a binary relation on $\carrier{\structureA}_\sortD$ 
            such that for every $\beta, \gamma \in \carrier{\structureA}_\sortD$, 
            \[
                \beta \leqsymbol^\structureA \gamma \Iff \text{$\gamma \in \beta$ or $\beta = \gamma$} \displaypunctuation{.}
            \] 
        
        \item
            $\zerosymbol^\structureA = \alpha$;
            
        \item
            $\dsymbol^\structureA$ is a function from $\carrier{\structureA}_\sortA \times \carrier{\structureA}_\sortA$ to $\carrier{\structureA}_\sortD$ 
            such that for every $I_1, I_2 \in \carrier{\structureA}_\sortA$,  
            \[
                \dsymbol^\structureA(I_1, I_2) = \{\beta \mathrel{\vert} \pbox[t]{\textwidth}{
                    $\beta \in \alpha$, 
                    and for every $A$ 
                    such that $l(A') \in \beta$ or $l(A') = \beta$, 
                    $A' \in I_1$ if and only if \\$A' \in I_2$\}.
                }
            \]
        
    \end{enumerate}
    
    Let $\leq_a$ be a binary relation on $\alpha$ 
    such that for every $\beta, \gamma \in \alpha$, 
    \[
        \beta \leq_a \gamma \Iff \text{$\beta \in \gamma$ or $\beta = \gamma$} \displaypunctuation{.} 
    \]
    
    Clearly, 
    $\pair{\alpha}{\leq_\alpha}$ is an ordered set. 
    
    It is easy to verify that $\structureA$ is a directed-complete and spherically complete generalized ultrametric semilattice.
\end{example}


\section{Contracting and Strictly Contracting Functions} \label{sec:constracting_and_strictly_contracting_functions}

Assume a function $F$ on $\structureA$. 

We say that $F$ is \emph{contracting} 
if and only if 
for every $a_1, a_2 \in \carrier{\structureA}_\sortA$, 
\[
    \distance{\structureA}{F(a_1)}{F(a_2)} \leqsymbol^\structureA \distance{\structureA}{a_1}{a_2}\displaypunctuation{.}
\] 

In other words, 
a function is contracting just as long as the generalized distance between any two elements in the range of the function is smaller than or equal to that between the elements in the domain of the function that map to them.
Notice that, 
because $\leqsymbol^\structureA$ is not necessarily a total order, 
this is different, 
in general, 
from the generalized distance between any two elements in the domain of the function being no bigger than that between the elements in the range of the function that those map to, 
which is why we have opted for the term ``contracting'' over the term ``non-expanding''.

We say that $F$ is \emph{strictly contracting} 
if and only if 
for every $a_1, a_2 \in \carrier{\structureA}_\sortA$ 
such that $a_1 \neq a_2$, 
\[
    \distance{\structureA}{F(a_1)}{F(a_2)} \lsymbol^\structureA \distance{\structureA}{a_1}{a_2} \displaypunctuation{.}
\] 

The following is immediate:
\begin{proposition} \label{prop:if_strictly_contracting_then_contracting}
    If\/ $F$ is strictly contracting, 
    then\/ $F$ is contracting.
\end{proposition}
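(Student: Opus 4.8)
The plan is to prove the contrapositive of nothing in particular — rather, to unwind the two definitions and dispatch the claim by a trivial case split on whether the two arguments coincide. So fix arbitrary $a_1, a_2 \in \carrier{\structureA}_\sortA$; the goal is to establish $\distance{\structureA}{F(a_1)}{F(a_2)} \leqsymbol^\structureA \distance{\structureA}{a_1}{a_2}$.

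First I would treat the case $a_1 \neq a_2$. Here the hypothesis that $F$ is strictly contracting applies directly and gives $\distance{\structureA}{F(a_1)}{F(a_2)} \lsymbol^\structureA \distance{\structureA}{a_1}{a_2}$; since $\lsymbol^\structureA$ was introduced as the irreflexive part of $\leqsymbol^\structureA$, this immediately yields $\distance{\structureA}{F(a_1)}{F(a_2)} \leqsymbol^\structureA \distance{\structureA}{a_1}{a_2}$. Next I would treat the remaining case $a_1 = a_2$, which is the one \emph{not} covered by the defining condition for strict contraction. Here $F(a_1) = F(a_2)$, so by the identity of indiscernibles (Definition~\ref{def:generalized_ultrametric_semilattice}.\ref{enum:generalized_ultrametric_semilattice_3}) we have $\distance{\structureA}{a_1}{a_2} = \zerosymbol^\structureA = \distance{\structureA}{F(a_1)}{F(a_2)}$, and the desired inequality then follows from the reflexivity of $\leqsymbol^\structureA$ (which is part of $\sequence{\carrier{\structureA}_\sortD, \leqsymbol^\structureA, \zerosymbol^\structureA}$ being an ordered set, Definition~\ref{def:generalized_ultrametric_semilattice}.\ref{enum:generalized_ultrametric_semilattice_2}).

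I do not anticipate any genuine obstacle: the statement is immediate once the definitions are spelled out. The only point that calls for a moment's attention — and the sole reason the proof is not a one-liner quoting strict contraction — is that the defining clause for strict contraction is vacuous when $a_1 = a_2$, so that degenerate case must be closed separately using the identity of indiscernibles together with reflexivity of $\leqsymbol^\structureA$, as above.
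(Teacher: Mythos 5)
Your proof is correct and is exactly the argument the paper has in mind: the paper states this proposition as immediate without writing out a proof, and your case split (strict inequality implies $\leqsymbol^\structureA$ when $a_1 \neq a_2$; identity of indiscernibles plus reflexivity when $a_1 = a_2$) is just the obvious unwinding of the definitions. No gaps.
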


To return to Example~\ref{exam:discrete-event_real-time_signals},
the contracting and strictly contracting functions on the generalized ultrametric semilattice of all discrete-event real-time signals over $V$ are exactly the causal and strictly causal functions respectively on such signals
(see \cite{Matsikoudis2013TFTOSCF}, \cite{Matsikoudis2013aOFPOSCF}).
And in the case of Example~\ref{exam:Herbrand_interpretations},
if the normal logic program $P$ is a so-called ``locally hierarchical'' program,
then the level mapping $l$ can be chosen so that $P$ can be modelled as a strictly contracting function on $\structureA$
(see \cite{Hitzler2003GMAUDLP}).

Now, 
contracting functions need not have fixed points 
(e.g., see \cite[exam.\,3.4]{Matsikoudis2013TFTOSCF}). 
But what about strictly contracting functions? 

\begin{proposition} \label{prop:if_strictly_contracting_then_at_most_one_fixed_point}
    If\/ $F$ is strictly contracting, 
    then\/ $F$ has at most one fixed point.
\end{proposition}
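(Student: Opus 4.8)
The plan is a direct proof by contradiction that exploits only the strict contraction hypothesis together with the irreflexivity of $\lsymbol^\structureA$; none of the semilattice structure, the generalized ultrametric inequality, or the coordination clauses of Definition~\ref{def:generalized_ultrametric_semilattice}.\ref{enum:generalized_ultrametric_semilattice_4} is needed. First I would let $a_1$ and $a_2$ be arbitrary fixed points of $F$, so that $F(a_1) = a_1$ and $F(a_2) = a_2$, and suppose, toward a contradiction, that $a_1 \neq a_2$.

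Next, I would invoke the hypothesis that $F$ is strictly contracting. Since $a_1 \neq a_2$, the definition of strict contraction gives $\distance{\structureA}{F(a_1)}{F(a_2)} \lsymbol^\structureA \distance{\structureA}{a_1}{a_2}$. Substituting $F(a_1) = a_1$ and $F(a_2) = a_2$ on the left-hand side, this becomes $\distance{\structureA}{a_1}{a_2} \lsymbol^\structureA \distance{\structureA}{a_1}{a_2}$.

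Finally, I would observe that $\lsymbol^\structureA$ is, by definition, the irreflexive part of $\leqsymbol^\structureA$, hence an irreflexive relation on $\carrier{\structureA}_\sortD$, so the statement $\distance{\structureA}{a_1}{a_2} \lsymbol^\structureA \distance{\structureA}{a_1}{a_2}$ is absurd. Therefore $a_1 = a_2$, which shows that any two fixed points of $F$ coincide, i.e.\ $F$ has at most one fixed point.

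There is no real obstacle here: the argument is a two-line calculation, and the only subtlety worth flagging is that the contradiction comes from the irreflexivity of the \emph{strict} part $\lsymbol^\structureA$ of the order, which itself is immediate from $\leqsymbol^\structureA$ being a (reflexive, transitive, antisymmetric) order relation on the distance set. It is also worth noting in passing that this proposition says nothing about \emph{existence} of a fixed point — that is precisely the content of the constructive fixed-point theorem to come — and that the analogous statement fails for merely contracting functions, as the cited counterexample shows.
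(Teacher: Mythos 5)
Your proof is correct and is essentially the same argument as the paper's: both derive from the fixed-point equations that $\distance{\structureA}{F(a_1)}{F(a_2)} = \distance{\structureA}{a_1}{a_2}$, which contradicts the strict inequality required by strict contraction (i.e.\ the irreflexivity of $\lsymbol^\structureA$). You merely spell out the substitution and the irreflexivity step a little more explicitly than the paper does.
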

\begin{proof}
    Suppose that $F$ is strictly contracting.
    
    Suppose, 
    toward contradiction, 
    that $a_1$ and $a_2$ are two distinct fixed points of $F$.
    Then 
    \[
        \distance{\structureA}{F(a_1)}{F(a_2)} = \distance{\structureA}{a_1}{a_2} \displaypunctuation{,}
    \]
    obtaining a contradiction.
    
    Thus, 
    $F$ has at most one fixed point.
\end{proof}

\begin{theorem} \label{thm:if_strictly_contracting_then_unique_fixed_point}
    If\/ $\structureA$ is spherically complete,
    then every strictly contracting function on\/ $\structureA$ has exactly one fixed point.
\end{theorem}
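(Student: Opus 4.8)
Uniqueness is already given by Proposition~\ref{prop:if_strictly_contracting_then_at_most_one_fixed_point}, so it only remains to produce a fixed point of a given strictly contracting function $F$ on a spherically complete $\structureA$. The plan is to follow the classical argument of Priess-Crampe and Ribenboim. Associate to each $a \in \carrier{\structureA}_\sortA$ the closed ball $B_a = \Set{a' \in \carrier{\structureA}_\sortA}{$\distance{\structureA}{a'}{a} \leqsymbol^\structureA \distance{\structureA}{a}{F(a)}$}$, centred at $a$, whose radius is the ``displacement'' $\distance{\structureA}{a}{F(a)}$ of $a$ under $F$. I would use spherical completeness together with Zorn's lemma to exhibit a ball $B_a$ that is minimal under inclusion among all such balls, and then show that its centre $a$ must be a fixed point of $F$.

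The technical heart is a single monotonicity lemma: \emph{whenever $c \in B_a$, one has $\distance{\structureA}{c}{F(c)} \leqsymbol^\structureA \distance{\structureA}{a}{F(a)}$, and hence $B_c \subseteq B_a$.} I would prove it by a short chase, using that $F$ is contracting (Proposition~\ref{prop:if_strictly_contracting_then_contracting}) and writing $p$ for $\distance{\structureA}{a}{F(a)}$: from $\distance{\structureA}{F(c)}{F(a)} \leqsymbol^\structureA \distance{\structureA}{c}{a} \leqsymbol^\structureA p$ and $\distance{\structureA}{F(a)}{a} \leqsymbol^\structureA p$, the generalized ultrametric inequality gives $F(c) \in B_a$; since then $\distance{\structureA}{c}{a} \leqsymbol^\structureA p$ and $\distance{\structureA}{a}{F(c)} \leqsymbol^\structureA p$, the generalized ultrametric inequality gives $\distance{\structureA}{c}{F(c)} \leqsymbol^\structureA p$; and for $x \in B_c$, from $\distance{\structureA}{x}{c} \leqsymbol^\structureA \distance{\structureA}{c}{F(c)} \leqsymbol^\structureA p$ and $\distance{\structureA}{c}{a} \leqsymbol^\structureA p$, the generalized ultrametric inequality gives $x \in B_a$ --- three applications of the inequality, all against the one threshold $p$, plus symmetry. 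Two consequences are then immediate. First, $F(a) \in B_a$ for every $a$ (by symmetry $\distance{\structureA}{F(a)}{a}$ equals the radius $\distance{\structureA}{a}{F(a)}$), so $B_{F(a)} \subseteq B_a$. Second, if $a \neq F(a)$ then $B_{F(a)} \subsetneq B_a$: we have $a \in B_a$ (since $\distance{\structureA}{a}{a} = \zerosymbol^\structureA$ is least), but $a \notin B_{F(a)}$, because $a \in B_{F(a)}$ would mean $\distance{\structureA}{a}{F(a)} \leqsymbol^\structureA \distance{\structureA}{F(a)}{F(F(a))}$, which, together with $\distance{\structureA}{F(a)}{F(F(a))} \lsymbol^\structureA \distance{\structureA}{a}{F(a)}$ (strict contraction at the pair $a, F(a)$), contradicts antisymmetry of $\leqsymbol^\structureA$.

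Granting the lemma, I would apply Zorn's lemma to the family $\mathcal{B} = \Set{B_a}{$a \in \carrier{\structureA}_\sortA$}$ ordered by reverse inclusion. A non-empty chain in $(\mathcal{B}, \supseteq)$ is a non-empty chain of balls in the generalized ultrametric space underlying $\structureA$, so by spherical completeness it has some $c$ in its intersection; by the lemma $B_c \subseteq B_a$ for every member $B_a$ of the chain, so $B_c \in \mathcal{B}$ is an upper bound of the chain in $(\mathcal{B}, \supseteq)$ (the empty chain is bounded by any member of the non-empty $\mathcal{B}$). Zorn's lemma then yields an inclusion-minimal $B_a \in \mathcal{B}$. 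Since $B_{F(a)} \subseteq B_a$ by the first consequence above, minimality forces $B_{F(a)} = B_a$; but if $a \neq F(a)$ then $B_{F(a)} \subsetneq B_a$ by the second consequence, a contradiction. Hence $F(a) = a$, and with Proposition~\ref{prop:if_strictly_contracting_then_at_most_one_fixed_point} we conclude that $F$ has exactly one fixed point.

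The one genuine obstacle is the monotonicity lemma --- specifically, the need to \emph{prove} that $c \in B_a$ entails $B_c \subseteq B_a$. In an ordinary ultrametric space the radii are linearly ordered and this is trivial, but here $\leqsymbol^\structureA$ need not be total, so one cannot take maxima of distances and must instead route every estimate through the generalized ultrametric inequality against the single fixed threshold $\distance{\structureA}{a}{F(a)}$; getting this threading right --- and, relatedly, seeing that the relevant radius is non-increasing along $F$-iterates even though radii may be incomparable in general --- is where the care goes. Everything else (the Zorn bookkeeping, the appeal to spherical completeness, uniqueness) is routine, and the use of Zorn's lemma is exactly what makes this existence proof non-constructive, in line with the introduction. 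Note finally that only the generalized ultrametric space underlying $\structureA$ enters this proof; the semilattice operation $\meetsymbol^\structureA$ plays no role here, and is first needed for the constructive strengthening of the theorem.
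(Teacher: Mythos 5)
Your proposal is correct, and it is essentially the paper's approach: the paper proves this theorem simply by invoking the Priess-Crampe--Ribenboim fixed-point theorem for strictly contracting functions on spherically complete generalized ultrametric spaces (together with Proposition~\ref{prop:if_strictly_contracting_then_at_most_one_fixed_point} for uniqueness), and your argument is a correct, self-contained rendition of exactly that classical ball-shrinking proof, rightly using only the underlying generalized ultrametric space and not the meet operation.
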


Theorem~\ref{thm:if_strictly_contracting_then_unique_fixed_point} follows immediately from the fixed-point theorem of Priess-Crampe and Ribenboim for strictly contracting functions on spherically complete generalized ultrametric spaces (see \cite[thm.\,1]{Priess-Crampe1993FPCAGPS}), 
which is sometimes, 
and perhaps a little too liberally, 
referred to as a generalization of the \emph{Banach Fixed-Point Theorem}. 
The following, 
which follows immediately from another theorem of Priess-Crampe and Ribenboim (e.g., see \emph{Banach's Fixed Point Theorem} in \cite{Schorner2003UFPTAA}), 
justifies the use of the stronger property of spherical completeness in place of the standard property of Cauchy-completeness used in the latter:
\begin{theorem} \label{thm:non-empty_spherically_complete_if_and_only_if_strictly_contracting_principle}
    If\/ $\pair{\carrier{\structureA}_\sortD}{\leqsymbol^\structureA}$ is totally ordered, 
    then\/ $\structureA$ is spherically complete 
    if and only if 
    every strictly contracting function on\/ $\structureA$ has a fixed point.
\end{theorem}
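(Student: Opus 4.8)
The forward implication needs nothing new: if $\structureA$ is spherically complete, Theorem~\ref{thm:if_strictly_contracting_then_unique_fixed_point} already supplies a (unique) fixed point for every strictly contracting function on $\structureA$, and its proof does not use totality of $\leqsymbol^\structureA$. So the plan is to prove the converse by contraposition: assuming $\structureA$ is not spherically complete, I would construct a strictly contracting function on $\structureA$ with no fixed point. It is worth saying in advance that this construction will use nothing about $\meetsymbol^\structureA$ — only the associated generalized ultrametric space matters — which is why the statement is in the end a corollary of the Priess-Crampe--Ribenboim theory (see \emph{Banach's Fixed Point Theorem} in \cite{Schorner2003UFPTAA}, building on \cite{Priess-Crampe1993FPCAGPS}), and one could simply cite that and stop. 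But here is how the argument goes.

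By hypothesis there is a non-empty chain $C$ of balls of $\sequence{\carrier{\structureA}_\sortA, \carrier{\structureA}_\sortD, \leqsymbol^\structureA, \zerosymbol^\structureA, \dsymbol^\structureA}$ with $\Intersection C = \emptyset$; I would fix for each $B \in C$ a center $c_B$ and radius $r_B$, so that $B = \Set{a \in \carrier{\structureA}_\sortA}{$\dsymbol^\structureA(a, c_B) \leqsymbol^\structureA r_B$}$. For $a \in \carrier{\structureA}_\sortA$ set $C_a = \Set{B \in C}{$a \notin B$}$. Then $C_a \neq \emptyset$ because $\Intersection C = \emptyset$, and $C_a$ is an initial segment of $\pair{C}{\subseteq}$ because $C$ is a chain, so the family $\Set{C_a}{$a \in \carrier{\structureA}_\sortA$}$ of these ``cuts'' is itself linearly ordered by inclusion. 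The decisive point is that $F$ must be defined through the cut, not pointwise: I would fix a choice function $a \mapsto B_a$ with $B_a \in C_a$ and with $B_a = B_{a'}$ whenever $C_a = C_{a'}$, and then set $F(a) = c_{B_a}$.

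The verification then has two parts. First, $F$ has no fixed point: $F(a) = a$ would put $a = c_{B_a} \in B_a$, contradicting $B_a \in C_a$. Second, $F$ is strictly contracting. Given $a_1 \neq a_2$, the cuts $C_{a_1}, C_{a_2}$ are comparable, say $C_{a_1} \subseteq C_{a_2}$. If $C_{a_1} = C_{a_2}$, then $B_{a_1} = B_{a_2}$ by construction, so $\dsymbol^\structureA(F(a_1), F(a_2)) = \zerosymbol^\structureA$, which is strictly below $\dsymbol^\structureA(a_1, a_2)$ since the latter is not $\zerosymbol^\structureA$ (identity of indiscernibles) and $\zerosymbol^\structureA$ is least. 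If $C_{a_1} \subsetneq C_{a_2}$, take $B := B_{a_2}$ if $B_{a_2} \notin C_{a_1}$ and otherwise take any $B \in C_{a_2} \setminus C_{a_1}$; in both cases $B \in C_{a_2}$, $B \notin C_{a_1}$, and $B_{a_2} \subseteq B$, and one checks by the chain property that $B_{a_1} \subsetneq B$ as well. Hence $a_1 \in B$, $a_2 \notin B$, and $F(a_1), F(a_2) \in B$. From $F(a_1), F(a_2) \in B$ together with the generalized ultrametric inequality one gets $\dsymbol^\structureA(F(a_1), F(a_2)) \leqsymbol^\structureA r_B$; from $a_1 \in B$, $a_2 \notin B$, symmetry, and the generalized ultrametric inequality, $\dsymbol^\structureA(a_1, a_2) \leqsymbol^\structureA r_B$ is impossible. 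Chaining these yields $\dsymbol^\structureA(F(a_1), F(a_2)) \lsymbol^\structureA \dsymbol^\structureA(a_1, a_2)$ — and this last step is precisely where totality of $\leqsymbol^\structureA$ is indispensable: it is what converts the failure of $\dsymbol^\structureA(a_1, a_2) \leqsymbol^\structureA r_B$ into $r_B \lsymbol^\structureA \dsymbol^\structureA(a_1, a_2)$, whereas in a non-totally-ordered distance set the two distances could merely be incomparable, and then the converse of the theorem genuinely breaks down.

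I do not expect a deep obstacle: the one genuine subtlety is realizing that the map has to be chosen cut-wise — a pointwise choice of an excluding ball for each $a$ is not enough for the contraction estimate — after which everything reduces to two elementary facts about balls (all pairwise distances within a ball of radius $r$ are $\leqsymbol^\structureA r$, and a point outside such a ball is at distance strictly above $r$ from every point inside it) plus the small case split above. The mild bookkeeping in choosing the separating ball $B$ when $B_{a_2} \in C_{a_1}$ is the only place one has to be even slightly careful. Otherwise, as noted, one may invoke the Priess-Crampe--Ribenboim result directly, the point being only that the meet operation of $\structureA$ is irrelevant to the statement.
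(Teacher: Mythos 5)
Your proposal is correct, but it takes a genuinely different route from the paper, which offers no argument of its own for this statement: Theorem~\ref{thm:non-empty_spherically_complete_if_and_only_if_strictly_contracting_principle} is simply said to follow immediately from another theorem of Priess-Crampe and Ribenboim, namely the version of \emph{Banach's Fixed Point Theorem} for generalized ultrametric spaces with totally ordered distance sets presented in \cite{Schorner2003UFPTAA} --- exactly the citation shortcut you mention and then decline, and your observation that $\meetsymbol^\structureA$ is irrelevant matches the paper's framing, since the statement concerns only the underlying generalized ultrametric space. Your contrapositive construction supplies the nontrivial half in a self-contained way (the forward half is indeed Theorem~\ref{thm:if_strictly_contracting_then_unique_fixed_point}, which does not need totality), and it checks out: the cuts $C_a = \{B \in C : a \notin B\}$ are non-empty and downward closed in the chain, hence pairwise comparable; choosing $B_a$ as a function of the cut alone forces $F(a_1) = F(a_2)$ whenever $C_{a_1} = C_{a_2}$, which settles that case by the identity of indiscernibles and leastness of $\zerosymbol^\structureA$; and when $C_{a_1} \subsetneq C_{a_2}$ your separating ball $B$ satisfies $a_1 \in B$, $a_2 \notin B$, and $F(a_1), F(a_2) \in B_{a_1} \cup B_{a_2} \subseteq B$, so that $\dsymbol^\structureA(F(a_1), F(a_2)) \leqsymbol^\structureA r_B$ while $\dsymbol^\structureA(a_1, a_2) \leqsymbol^\structureA r_B$ fails, and totality is used precisely to convert that failure into $r_B \lsymbol^\structureA \dsymbol^\structureA(a_1, a_2)$. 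Your two cautionary remarks are also accurate: a pointwise choice of excluding ball gives no lower bound on $\dsymbol^\structureA(a_1, a_2)$ (two points outside a common ball may be close to each other), and without totality the relevant distances may be incomparable, consistent with the paper's remark that the totality hypothesis cannot be discarded. In short, the paper's treatment buys brevity by outsourcing the proof; your argument buys an explicit witness of the failure of the fixed-point property and a visible accounting of exactly where totality and the failure of spherical completeness enter.
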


Note that the hypothesis of $\pair{\tagset}{\preceq}$ being totally ordered in Theorem~\ref{thm:non-empty_spherically_complete_if_and_only_if_strictly_contracting_principle} cannot be discarded 
(see \cite[thm.\,5.5 and exam.\,5.8]{Matsikoudis2013TFTOSCF}).  


\section{Fixed-Point Theory} \label{sec:fixed-point_theory}

We now develop the rudiments of a constructive fixed-point theory for strictly contracting functions.


\subsection{Existence} \label{subsec:existence}

We start by proving another fixed-point existence result for strictly contracting functions, 
which is similar to Theorem~\ref{thm:if_strictly_contracting_then_unique_fixed_point}, 
but has a different premise.
The proof is more like Naundorf's proof in \cite{Naundorf2000SCFHAUFP}, 
but, 
as also possible in the case of the existence part of Theorem~\ref{thm:if_strictly_contracting_then_unique_fixed_point}
(see \cite[p.\,229]{Priess-Crampe1993FPCAGPS}), 
our main theorem applies to a more general type of function.

Assume a function $F$ on $\structureA$.

We say that $F$ is \emph{strictly contracting on orbits} 
if and only if 
for every $a \in \carrier{\structureA}_\sortA$ 
such that $a \neq F(a)$, 
\[
    \distance{\structureA}{F(a)}{F(F(a))} \lsymbol^\structureA \distance{\structureA}{a}{F(a)} \displaypunctuation{.}
\] 

In other words, 
$F$ is strictly contracting on orbits just as long as the generalized distance between every two successive elements in the orbit\footnote{
    For every set $A$, 
    every function $f$ on $A$, 
    and any $a \in A$,
    the \emph{orbit} of $a$ under $f$ is the sequence $\Sequence{f^n(a)}{$n \in \omega$}$.
} of every $a \in \carrier{\structureA}_\sortA$ under $F$ gets smaller and smaller along the orbit.

The following is immediate:
\begin{proposition} \label{prop:if_strictly_contracting_then_stritly_contracting_on_orbits}
    If\/ $F$ is strictly contracting, 
    then\/ $F$ is strictly contracting on orbits.
\end{proposition}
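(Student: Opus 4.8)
The plan is to obtain the conclusion by a single instantiation of the hypothesis. Assume $F$ is strictly contracting, and let $a \in \carrier{\structureA}_\sortA$ be arbitrary with $a \neq F(a)$. The goal is to show
\[
    \distance{\structureA}{F(a)}{F(F(a))} \lsymbol^\structureA \distance{\structureA}{a}{F(a)} \displaypunctuation{.}
\]
Since the definition of strictly contracting quantifies over \emph{all} pairs of distinct elements of $\carrier{\structureA}_\sortA$, I would simply take $a_1 := a$ and $a_2 := F(a)$. These are members of $\carrier{\structureA}_\sortA$ (the latter because $F$ is a function on $\structureA$), and they are distinct by the standing assumption $a \neq F(a)$.

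Applying the strict contraction inequality to this pair yields $\distance{\structureA}{F(a_1)}{F(a_2)} \lsymbol^\structureA \distance{\structureA}{a_1}{a_2}$, which, after substituting back $a_1 = a$ and $a_2 = F(a)$, is exactly $\distance{\structureA}{F(a)}{F(F(a))} \lsymbol^\structureA \distance{\structureA}{a}{F(a)}$. Since $a$ was arbitrary subject only to $a \neq F(a)$, this establishes that $F$ is strictly contracting on orbits.

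There is no real obstacle here: the orbit condition is literally the special case of the contraction condition restricted to pairs of the form $\sequence{a, F(a)}$, so the argument is pure instantiation with no calculation or appeal to any structural axiom of $\structureA$. (The only thing worth a half-sentence of care is noting that $F(a)$ indeed lies in $\carrier{\structureA}_\sortA$, so that the hypothesis is applicable to the chosen pair.) Hence the proposition is, as stated, immediate.
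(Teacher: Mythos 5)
Your proof is correct and is exactly the argument the paper has in mind when it labels the proposition ``immediate'': the orbit condition is the special case of strict contraction applied to the pair $a$, $F(a)$, which are distinct by hypothesis. Nothing further is needed.
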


\begin{theorem} \label{thm:strictly_contracting_on_orbits_fixed-point_existence}
    If\/ $\structureA$ is directed-complete, 
    then every contracting function on\/ $\structureA$ that is strictly contracting on orbits has a fixed point.
\end{theorem}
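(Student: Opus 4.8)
Here is how I would approach the proof.

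The fixed point will be obtained as the least upper bound of a transfinite $\sqsubseteqsymbol^\structureA$-increasing sequence of approximations, each obtained from its predecessor by applying $F$ and then discarding the part of the result that is not yet ``forced''; directed joins are taken at limit stages, and this is the only place directed-completeness enters. The picture is that of Examples~\ref{exam:finite_and_infinite_sequences}--\ref{exam:Herbrand_interpretations}: one commits, one ``level'' at a time, to the eventual fixed point, strict contraction on orbits guaranteeing both that a level once committed to is never revised and that a genuine increment is made at every step. Since $\carrier{\structureA}_\sortA$ is a set, no such sequence can be strictly increasing through all the ordinals, so the construction must stabilize, at which point a fixed point has been reached.

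The technical engine is the following consequence of Definition~\ref{def:generalized_ultrametric_semilattice}.\ref{enum:generalized_ultrametric_semilattice_4} and the generalized ultrametric inequality, valid for every contracting $F$: if $a_1 \sqsubseteqsymbol^\structureA a_2$ and $\distance{\structureA}{a_2}{F(a_2)} \leqsymbol^\structureA \distance{\structureA}{a_1}{a_2}$, then $a_1 \sqsubseteqsymbol^\structureA F(a_2)$. Indeed, taking $p := \distance{\structureA}{a_1}{a_2}$, we have $\distance{\structureA}{a_1}{a_2} \leqsymbol^\structureA p$ trivially and $\distance{\structureA}{a_2}{F(a_2)} \leqsymbol^\structureA p$ by hypothesis, so the generalized ultrametric inequality yields $\distance{\structureA}{a_1}{F(a_2)} \leqsymbol^\structureA p$; clause \ref{enum:generalized_ultrametric_semilattice_4_1} of Definition~\ref{def:generalized_ultrametric_semilattice}.\ref{enum:generalized_ultrametric_semilattice_4}, in its $\sqsubseteqsymbol^\structureA$-form and applied with $(a_1, F(a_2), a_2)$ in the role of $(a_1, a_2, a_3)$, then gives $a_1 \meetsymbol^\structureA a_2 \sqsubseteqsymbol^\structureA a_1 \meetsymbol^\structureA F(a_2)$, and since $a_1 \sqsubseteqsymbol^\structureA a_2$ means $a_1 \meetsymbol^\structureA a_2 = a_1$, this is $a_1 \sqsubseteqsymbol^\structureA F(a_2)$. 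Informally: once $F$ has brought an element sufficiently close to its own image, it can no longer disturb the elements a chain has already passed through. In addition I would use a few elementary identities relating $\meetsymbol^\structureA$ and $\dsymbol^\structureA$, all consequences of Definition~\ref{def:generalized_ultrametric_semilattice} alone (see \cite{Matsikoudis2013AAOTTOGUSOLS}), notably that $F$ has a \emph{post-fixed point}, i.e.\ an $a$ with $a \sqsubseteqsymbol^\structureA F(a)$, and that $\dsymbol^\structureA$ is ``continuous'' along directed joins: for every directed $D \subseteq \carrier{\structureA}_\sortA$, every $a \in \carrier{\structureA}_\sortA$, and every $p \in \carrier{\structureA}_\sortD$, if $\distance{\structureA}{a}{b} \leqsymbol^\structureA p$ for all $b \in D$, then $\distance{\structureA}{a}{\LUBsymbol^\structureA D} \leqsymbol^\structureA p$.

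With this in hand, on the post-fixed points of $F$ I would define a successor operation $G$ such that $a \sqsubseteqsymbol^\structureA G(a) \sqsubseteqsymbol^\structureA F(a)$ and $G(a) \sqsubseteqsymbol^\structureA F(G(a))$ always, while $a \sqsubsetsymbol^\structureA G(a)$ and $\distance{\structureA}{G(a)}{F(G(a))} \lsymbol^\structureA \distance{\structureA}{a}{F(a)}$ when $a \neq F(a)$, and $G(a) = a$ when $a = F(a)$; heuristically, $G(a)$ should be the largest element extending $a$ that lies below $F(b)$ for every $b \sqsupseteqsymbol^\structureA a$. Granting $G$, fix a post-fixed point $a_0$ and set $a_{\xi+1} = G(a_\xi)$ and $a_\lambda = \LUBsymbol^\structureA \Set{a_\xi}{$\xi < \lambda$}$ at limit $\lambda$. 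A transfinite induction shows that $\sequence{a_\xi}$ is $\sqsubseteqsymbol^\structureA$-increasing --- so each of its proper initial segments is directed and the limit joins exist by directed-completeness --- that every $a_\xi$ is a post-fixed point of $F$, and that the defects $\distance{\structureA}{a_\xi}{F(a_\xi)}$ are $\leqsymbol^\structureA$-decreasing; the only delicate point is that a limit $a_\lambda$ is again a post-fixed point, which follows by applying the engine lemma to each $a_\xi \sqsubseteqsymbol^\structureA a_\lambda$, using the continuity of $\dsymbol^\structureA$ and the estimate $\distance{\structureA}{a_\lambda}{F(a_\lambda)} \leqsymbol^\structureA \distance{\structureA}{a_\xi}{F(a_\xi)}$ for every $\xi < \lambda$. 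Finally, since $\carrier{\structureA}_\sortA$ is a set, $\sequence{a_\xi}$ cannot be strictly increasing at every successor stage --- otherwise $\xi \mapsto a_\xi$ would embed the class of all ordinals into a set --- so $a_{\xi+1} = a_\xi$, i.e.\ $G(a_\xi) = a_\xi$, for some $\xi$; by the defining properties of $G$ this can happen only if $a_\xi = F(a_\xi)$, and $a_\xi$ is the sought fixed point. The main obstacle --- and the one step that genuinely goes beyond what the ultrametric structure alone provides, so that the interplay axioms of Definition~\ref{def:generalized_ultrametric_semilattice}.\ref{enum:generalized_ultrametric_semilattice_4} become indispensable --- is the construction of $G$: making precise ``the already-forced part of $F(a)$'' and proving it is a strict extension of $a$ with strictly smaller defect.
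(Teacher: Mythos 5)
Your strategy is essentially the constructive iteration that the paper develops later (Section~\ref{subsec:construction}), but as it stands it has two genuine gaps, and they sit exactly at the two load-bearing points. First, the successor operation $G$ is never constructed; you yourself flag it as ``the main obstacle,'' and your heuristic description (``the largest element extending $a$ that lies below $F(b)$ for every $b \sqsupseteq a$'') need not denote anything, since such a largest element need not exist. The paper's answer is simply $(\OmTsymbol^\structureA F)(a) = F(a) \meetsymbol^\structureA F(F(a))$: Lemma~\ref{lem:progression_of_post-fixed_points} shows, for any contracting $F$, that this element is always a post-fixed point and that it extends $a$ whenever $a$ is a post-fixed point, and strictness of the extension when $a \neq F(a)$ is where strict contraction on orbits enters: if $a = F(a) \meetsymbol^\structureA F(F(a))$, then $\distance{\structureA}{a}{F(a)} = \distance{\structureA}{F(a) \meetsymbol^\structureA F(F(a))}{F(a) \meetsymbol^\structureA F(a)} \leqsymbol^\structureA \distance{\structureA}{F(a)}{F(F(a))} \lsymbol^\structureA \distance{\structureA}{a}{F(a)}$, a contradiction. (Your additional requirement $\distance{\structureA}{G(a)}{F(G(a))} \lsymbol^\structureA \distance{\structureA}{a}{F(a)}$ is neither available for this choice of $G$ nor needed: stabilization follows from cardinality alone, as you note.)

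Second, your limit stage rests on two unproven claims: the ``continuity'' of $\dsymbol^\structureA$ along directed joins, which is not an elementary identity following from Definition~\ref{def:generalized_ultrametric_semilattice} but a substantive statement about least upper bounds that itself requires proof (if it is true at all for arbitrary, not totally ordered, distance sets), and the estimate $\distance{\structureA}{a_\lambda}{F(a_\lambda)} \leqsymbol^\structureA \distance{\structureA}{a_\xi}{F(a_\xi)}$, which is circular as stated, since it already constrains $F(a_\lambda)$ --- precisely what the limit step is supposed to establish. (Likewise, that a contracting $F$ has a post-fixed point at all is Lemma~\ref{lem:progression_of_post-fixed_points}.\ref{enum:progression_of_post-fixed_points_1}, not a generic identity of the structure.) The paper closes exactly this hole with Lemma~\ref{lem:supremum_of_post-fixed_points}: the least upper bound of any set of post-fixed points of a contracting function is again a post-fixed point, proved purely from the two interplay axioms and the generalized ultrametric inequality, with no continuity assumption. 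With $G = \OmTsymbol^\structureA F$ and Lemma~\ref{lem:supremum_of_post-fixed_points} in place, your transfinite recursion does go through and is in substance the paper's Lemmas~\ref{lem:monotone_operation}--\ref{lem:fixed-point_preservation} and Theorem~\ref{thm:strictly_contracting_on_orbits_fixed_point}. For the record, the paper proves the present theorem by a shorter, non-constructive route: Kuratowski's Lemma yields a $\subset$-maximal chain of post-fixed points, its least upper bound is a post-fixed point by Lemma~\ref{lem:supremum_of_post-fixed_points}, and if it were not fixed, $\OmTsymbol^\structureA F$ would produce a strictly larger post-fixed point, contradicting maximality.
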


Before we embark on the proof of the theorem,
we prove two important lemmas that will be useful throughout this section.

For every function $F$ on $\structureA$, 
and every $a \in \carrier{\structureA}_\sortA$, 
we say that $a$ is a \emph{post-fixed point} of $F$ 
if and only if 
$a \sqsubseteqsymbol^\structureA F(a)$.

\begin{lemma} \label{lem:progression_of_post-fixed_points}
    For every contracting function\/ $F$ on\/ $\structureA$, 
    and every\/ $a \in \carrier{\structureA}_\sortA$, 
    the following are true:
    \begin{enumerate}

        \item \label{enum:progression_of_post-fixed_points_1}
            $F(a) \meetsymbol^\structureA F(F(a))$ is a post-fixed point of\/ $F$; 

        \item \label{enum:progression_of_post-fixed_points_2}
            if\/ $a$ is a post-fixed point of\/ $F$,
            then\/ $a \sqsubseteqsymbol^\structureA F(a) \meetsymbol^\structureA F(F(a))$.

    \end{enumerate}
\end{lemma}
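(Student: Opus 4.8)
The plan is to derive both clauses from the two interplay axioms of Definition~\ref{def:generalized_ultrametric_semilattice}.\ref{enum:generalized_ultrametric_semilattice_4}, the generalized ultrametric inequality, and the hypothesis that $F$ is contracting; throughout, I use the semilattice facts that $\meetsymbol^\structureA$ is idempotent, commutative, and associative, that $a_1 \meetsymbol^\structureA a_2 \sqsubseteqsymbol^\structureA a_1$ always, and that $\sqsubseteqsymbol^\structureA$ is a partial order. The one auxiliary fact I would prove first is this: if $a_1 \sqsubseteqsymbol^\structureA a_2$ and $\dsymbol^\structureA(a_2, a_3) \leqsymbol^\structureA \dsymbol^\structureA(a_1, a_2)$, then $a_1 \sqsubseteqsymbol^\structureA a_3$. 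Indeed, applying the generalized ultrametric inequality with $p = \dsymbol^\structureA(a_1, a_2)$ gives $\dsymbol^\structureA(a_1, a_3) \leqsymbol^\structureA \dsymbol^\structureA(a_1, a_2)$; then the $\sqsubseteqsymbol^\structureA$-form of Definition~\ref{def:generalized_ultrametric_semilattice}.\ref{enum:generalized_ultrametric_semilattice_4}.\ref{enum:generalized_ultrametric_semilattice_4_1}, instantiated at $a_1, a_3, a_2$, gives $a_1 \meetsymbol^\structureA a_2 \sqsubseteqsymbol^\structureA a_1 \meetsymbol^\structureA a_3$; since $a_1 \sqsubseteqsymbol^\structureA a_2$ the left-hand side equals $a_1$, and since $a_1 \meetsymbol^\structureA a_3 \sqsubseteqsymbol^\structureA a_1$, antisymmetry forces $a_1 \meetsymbol^\structureA a_3 = a_1$.

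Clause~\ref{enum:progression_of_post-fixed_points_2} is then almost immediate. If $a$ is a post-fixed point, then $a \sqsubseteqsymbol^\structureA F(a)$, and, since $F$ is contracting, $\dsymbol^\structureA(F(a), F(F(a))) \leqsymbol^\structureA \dsymbol^\structureA(a, F(a))$; the auxiliary fact, applied with $a_1 = a$, $a_2 = F(a)$, $a_3 = F(F(a))$, then yields $a \sqsubseteqsymbol^\structureA F(F(a))$. Hence $a \meetsymbol^\structureA \bigl( F(a) \meetsymbol^\structureA F(F(a)) \bigr) = \bigl( a \meetsymbol^\structureA F(a) \bigr) \meetsymbol^\structureA F(F(a)) = a \meetsymbol^\structureA F(F(a)) = a$, which is exactly $a \sqsubseteqsymbol^\structureA F(a) \meetsymbol^\structureA F(F(a))$.

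For clause~\ref{enum:progression_of_post-fixed_points_1}, set $b = F(a) \meetsymbol^\structureA F(F(a))$; the goal is $b \sqsubseteqsymbol^\structureA F(b)$. The strategy is to bound $\dsymbol^\structureA(F(a), F(b))$ by $\dsymbol^\structureA(F(a), F(F(a)))$ and then read off the conclusion from Definition~\ref{def:generalized_ultrametric_semilattice}.\ref{enum:generalized_ultrametric_semilattice_4}.\ref{enum:generalized_ultrametric_semilattice_4_1} applied at the point $F(a)$. For the bound: Definition~\ref{def:generalized_ultrametric_semilattice}.\ref{enum:generalized_ultrametric_semilattice_4}.\ref{enum:generalized_ultrametric_semilattice_4_2}, instantiated at $F(a), F(F(a)), F(a)$, together with idempotence, gives $\dsymbol^\structureA(b, F(a)) \leqsymbol^\structureA \dsymbol^\structureA(F(F(a)), F(a))$; the fact that $F$ is contracting, applied to $F(a)$ and $b$, gives $\dsymbol^\structureA(F(F(a)), F(b)) \leqsymbol^\structureA \dsymbol^\structureA(F(a), b)$; chaining these through symmetry and transitivity of $\leqsymbol^\structureA$ gives $\dsymbol^\structureA(F(F(a)), F(b)) \leqsymbol^\structureA \dsymbol^\structureA(F(a), F(F(a)))$, and then the generalized ultrametric inequality with $p = \dsymbol^\structureA(F(a), F(F(a)))$ gives $\dsymbol^\structureA(F(a), F(b)) \leqsymbol^\structureA \dsymbol^\structureA(F(a), F(F(a)))$. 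Now the $\sqsubseteqsymbol^\structureA$-form of Definition~\ref{def:generalized_ultrametric_semilattice}.\ref{enum:generalized_ultrametric_semilattice_4}.\ref{enum:generalized_ultrametric_semilattice_4_1}, instantiated at $F(a), F(b), F(F(a))$, yields $F(a) \meetsymbol^\structureA F(F(a)) \sqsubseteqsymbol^\structureA F(a) \meetsymbol^\structureA F(b)$, that is, $b \sqsubseteqsymbol^\structureA F(a) \meetsymbol^\structureA F(b) \sqsubseteqsymbol^\structureA F(b)$, so $b$ is a post-fixed point of $F$.

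The only genuine difficulty is finding, for clause~\ref{enum:progression_of_post-fixed_points_1}, the right distance estimate: one must compare $F(a)$ with $F(b)$ — comparing $b$ with $F(b)$ directly produces a circular bound — route the triangle-like inequality through $F(F(a))$, and apply Definition~\ref{def:generalized_ultrametric_semilattice}.\ref{enum:generalized_ultrametric_semilattice_4}.\ref{enum:generalized_ultrametric_semilattice_4_1} at $F(a)$ rather than at $b$. Once this routing is identified, each step is a single invocation of an axiom, and, notably, no case distinction on the relative $\sqsubseteqsymbol^\structureA$-position of $F(a)$ and $F(F(a))$ is needed.
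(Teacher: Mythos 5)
Your proof is correct and follows essentially the same route as the paper: both clauses are derived from the contraction hypothesis applied to the same pairs of points together with the two interplay axioms of Definition~\ref{def:generalized_ultrametric_semilattice}.\ref{enum:generalized_ultrametric_semilattice_4}. The only cosmetic differences are that for clause~\ref{enum:progression_of_post-fixed_points_1} the paper applies axiom~\ref{enum:generalized_ultrametric_semilattice_4_1} at $F(F(a))$ directly, so it never needs the generalized ultrametric inequality, whereas you apply it at $F(a)$ after one extra ultrametric step, and for clause~\ref{enum:progression_of_post-fixed_points_2} the paper obtains $a \meetsymbol^\structureA F(a) \sqsubseteqsymbol^\structureA F(a) \meetsymbol^\structureA F(F(a))$ by a single application of axiom~\ref{enum:generalized_ultrametric_semilattice_4_1} instead of routing through your auxiliary fact and $a \sqsubseteqsymbol^\structureA F(F(a))$.
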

\begin{proof}
    Assume a contracting function $F$ on $\structureA$,
    and $a \in \carrier{\structureA}_\sortA$.

    Since $F$ is contracting,
    by Definition~\ref{def:generalized_ultrametric_semilattice}.\ref{enum:generalized_ultrametric_semilattice_4_2},
    \begin{align*}
        \distance{\structureA}{F(F(a) \meetsymbol^\structureA F(F(a)))}{F(F(a))}    &\leqsymbol^\structureA \distance{\structureA}{F(a) \meetsymbol^\structureA F(F(a))}{F(a)} \\
                                                                                    &=                      \distance{\structureA}{F(a) \meetsymbol^\structureA F(F(a))}{F(a) \meetsymbol^\structureA F(a)} \\
                                                                                    &\leqsymbol^\structureA \distance{\structureA}{F(a)}{F(F(a))} \displaypunctuation{,}
    \end{align*}
    and thus,
    by Definition~\ref{def:generalized_ultrametric_semilattice}.\ref{enum:generalized_ultrametric_semilattice_4_1},
    \begin{align*}
        F(a) \meetsymbol^\structureA F(F(a))    &\sqsubseteqsymbol^\structureA  F(F(a) \meetsymbol^\structureA F(F(a))) \meetsymbol^\structureA F(F(a)) \\
                                                &\sqsubseteqsymbol^\structureA  F(F(a) \meetsymbol^\structureA F(F(a))) \displaypunctuation{.}
    \end{align*}

    Thus,
    \ref{enum:progression_of_post-fixed_points_1} is true.

    Suppose that $a \sqsubseteqsymbol^\structureA F(a)$.

    Since $F$ is contracting,
    \[
        \distance{\structureA}{F(a)}{F(F(a))} \leqsymbol^\structureA \distance{\structureA}{a}{F(a)} \displaypunctuation{,}
    \]
    and thus,
    by Definition~\ref{def:generalized_ultrametric_semilattice}.\ref{enum:generalized_ultrametric_semilattice_4_1},
    \[
        a \meetsymbol^\structureA F(a) \sqsubseteqsymbol^\structureA F(a) \meetsymbol^\structureA F(F(a)) \displaypunctuation{.}
    \]
    And since $a \sqsubseteqsymbol^\structureA F(a)$,
    $a \meetsymbol^\structureA F(a) = a$,
    and thus,
    \[
        a \sqsubseteqsymbol^\structureA F(a) \meetsymbol^\structureA F(F(a)) \displaypunctuation{.}
    \]

    Thus,
    \ref{enum:progression_of_post-fixed_points_2} is true.
\end{proof}

\begin{lemma} \label{lem:supremum_of_post-fixed_points}
    For every contracting function\/ $F$ on\/ $\structureA$,
    and any set\/ $P$ of post-fixed points of\/ $F$,
    if\/ $P$ has a least upper bound in\/ $\pair{\carrier{\structureA}_\sortA}{\sqsubseteqsymbol^\structureA}$,
    then\/ $\LUBsymbol^\structureA P$ is a post-fixed point of\/ $F$.
\end{lemma}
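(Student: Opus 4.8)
The plan is to write $s = \LUBsymbol^\structureA P$ and show that $F(s)$ is an upper bound of $P$ in $\pair{\carrier{\structureA}_\sortA}{\sqsubseteqsymbol^\structureA}$. Since $s$ is, by hypothesis, the \emph{least} upper bound of $P$, it will then follow immediately that $s \sqsubseteqsymbol^\structureA F(s)$, that is, that $s$ is a post-fixed point of $F$, which is what we want. So the whole task reduces to proving that $a \sqsubseteqsymbol^\structureA F(s)$ for every $a \in P$.

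Fix $a \in P$. The first step is to bound $\distance{\structureA}{a}{a \meetsymbol^\structureA F(s)}$. Because $a$ is a post-fixed point, $a \meetsymbol^\structureA F(a) = a$, so applying Definition~\ref{def:generalized_ultrametric_semilattice}.\ref{enum:generalized_ultrametric_semilattice_4_2} with $a_1 := a$, $a_2 := F(a)$, $a_3 := F(s)$ gives $\distance{\structureA}{a}{a \meetsymbol^\structureA F(s)} \leqsymbol^\structureA \distance{\structureA}{F(a)}{F(s)}$. Since $a \sqsubseteqsymbol^\structureA s$ and $F$ is contracting, $\distance{\structureA}{F(a)}{F(s)} \leqsymbol^\structureA \distance{\structureA}{a}{s}$, so transitivity of $\leqsymbol^\structureA$ yields $\distance{\structureA}{a}{a \meetsymbol^\structureA F(s)} \leqsymbol^\structureA \distance{\structureA}{a}{s}$. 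Now I would invoke Definition~\ref{def:generalized_ultrametric_semilattice}.\ref{enum:generalized_ultrametric_semilattice_4_1}, in its order-theoretic form, with $a_1 := a$, $a_2 := a \meetsymbol^\structureA F(s)$, $a_3 := s$, obtaining $a \meetsymbol^\structureA s \sqsubseteqsymbol^\structureA a \meetsymbol^\structureA (a \meetsymbol^\structureA F(s))$. Using $a \meetsymbol^\structureA s = a$ (from $a \sqsubseteqsymbol^\structureA s$) and, by idempotency and associativity of $\meetsymbol^\structureA$, $a \meetsymbol^\structureA (a \meetsymbol^\structureA F(s)) = a \meetsymbol^\structureA F(s)$, this reads $a \sqsubseteqsymbol^\structureA a \meetsymbol^\structureA F(s)$; since $a \meetsymbol^\structureA F(s) \sqsubseteqsymbol^\structureA a$ holds trivially, antisymmetry gives $a \meetsymbol^\structureA F(s) = a$, i.e.\ $a \sqsubseteqsymbol^\structureA F(s)$.

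As $a \in P$ was arbitrary, $F(s)$ is an upper bound of $P$, and hence $s \sqsubseteqsymbol^\structureA F(s)$, completing the proof. The argument is short; the only place that needs genuine care is selecting the right instances of the two coordination properties — in particular, realizing that clause~\ref{enum:generalized_ultrametric_semilattice_4_1} of Definition~\ref{def:generalized_ultrametric_semilattice} should be applied to the triple $a$, $a \meetsymbol^\structureA F(s)$, $s$ (and not, say, to $a$, $F(a)$, $F(s)$) — together with the routine semilattice simplifications. Note that directedness of $P$ plays no role: only the existence of $\LUBsymbol^\structureA P$ is used.
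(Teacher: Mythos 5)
Your proof is correct, and it follows the same overall strategy as the paper's: fix $a \in P$, show $a \sqsubseteqsymbol^\structureA F(\LUBsymbol^\structureA P)$, conclude by generalization that $F(\LUBsymbol^\structureA P)$ is an upper bound of $P$, and then use leastness of $\LUBsymbol^\structureA P$. The inner derivation, however, is genuinely leaner. The paper takes meets with $\LUBsymbol^\structureA P$: it applies Definition~\ref{def:generalized_ultrametric_semilattice}.\ref{enum:generalized_ultrametric_semilattice_4_2} twice (once to $\distance{\structureA}{F(a)}{F(\LUBsymbol^\structureA P)}$ and once using $F(a) \meetsymbol^\structureA a = a$), then needs the generalized ultrametric inequality twice to reach $\distance{\structureA}{\LUBsymbol^\structureA P}{(\LUBsymbol^\structureA P) \meetsymbol^\structureA F(\LUBsymbol^\structureA P)} \leqsymbol^\structureA \distance{\structureA}{a}{\LUBsymbol^\structureA P}$, and finally applies clause \ref{enum:generalized_ultrametric_semilattice_4_1} at $\LUBsymbol^\structureA P$. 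You instead take meets with $a$ itself: one application of clause \ref{enum:generalized_ultrametric_semilattice_4_2} (using $a \meetsymbol^\structureA F(a) = a$), the contracting property, and one application of clause \ref{enum:generalized_ultrametric_semilattice_4_1} with the triple $a$, $a \meetsymbol^\structureA F(\LUBsymbol^\structureA P)$, $\LUBsymbol^\structureA P$, after which only semilattice identities remain; the generalized ultrametric inequality is never used. Each instantiation you chose is licit (I checked the substitutions into both clauses), so what your variant buys is a shorter chain and the observation that this lemma does not depend on the ultrametric axiom at all, only on the two coordination properties and contraction. Two cosmetic remarks: the hypothesis $a \sqsubseteqsymbol^\structureA \LUBsymbol^\structureA P$ is not needed to invoke the contracting property (that step holds for any pair), and your final antisymmetry step can be skipped, since $a \sqsubseteqsymbol^\structureA a \meetsymbol^\structureA F(\LUBsymbol^\structureA P)$ already unfolds, via idempotency and associativity, to $a \meetsymbol^\structureA F(\LUBsymbol^\structureA P) = a$, which is literally $a \sqsubseteqsymbol^\structureA F(\LUBsymbol^\structureA P)$.
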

\begin{proof}
    Assume a contracting function $F$ on $\structureA$,
    and a set $P$ of post-fixed points of $F$ that has a least upper bound in $\pair{\carrier{\structureA}_\sortA}{\sqsubseteqsymbol^\structureA}$.

    Assume $a \in P$.

    Since $F$ is contracting,
    \begin{equation} \label{eqn:supremum_of_post-fixed_points_1}
        \distance{\structureA}{F(a)}{F(\LUBsymbol^\structureA P)} \leqsymbol^\structureA \distance{\structureA}{a}{\LUBsymbol^\structureA P} \displaypunctuation{.}
    \end{equation}
    By Definition~\ref{def:generalized_ultrametric_semilattice}.\ref{enum:generalized_ultrametric_semilattice_4_2} and (\ref{eqn:supremum_of_post-fixed_points_1}),
    \begin{equation} \label{eqn:supremum_of_post-fixed_points_2}
        \distance{\structureA}{(\LUBsymbol^\structureA P) \meetsymbol^\structureA F(a)}{(\LUBsymbol^\structureA P) \meetsymbol^\structureA F(\LUBsymbol^\structureA P)} \leqsymbol^\structureA \distance{\structureA}{a}{\LUBsymbol^\structureA P} \displaypunctuation{.}
    \end{equation}
    Also,
    since $a$ is a post-fixed point of $F$,
    by Definition~\ref{def:generalized_ultrametric_semilattice}.\ref{enum:generalized_ultrametric_semilattice_4_2},
    \begin{align} \label{eqn:supremum_of_post-fixed_points_3}
        \distance{\structureA}{a}{(\LUBsymbol^\structureA P) \meetsymbol^\structureA F(a)}    &=          \distance{\structureA}{F(a) \meetsymbol^\structureA a}{F(a) \meetsymbol^\structureA \LUBsymbol^\structureA P} \nonumber \\
                                                &\leqsymbol^\structureA  \distance{\structureA}{a}{\LUBsymbol^\structureA P} \displaypunctuation{.}
    \end{align}
    By (\ref{eqn:supremum_of_post-fixed_points_2}),
    (\ref{eqn:supremum_of_post-fixed_points_3}),
    and the generalized ultrametric inequality,
    \[
        \distance{\structureA}{a}{(\LUBsymbol^\structureA P) \meetsymbol^\structureA F(\LUBsymbol^\structureA P)} \leqsymbol^\structureA \distance{\structureA}{a}{\LUBsymbol^\structureA P} \displaypunctuation{.}
    \]
    Then,
    by the generalized ultrametric inequality,
    \[
        \distance{\structureA}{\LUBsymbol^\structureA P}{(\LUBsymbol^\structureA P) \meetsymbol^\structureA F(\LUBsymbol^\structureA P)} \leqsymbol^\structureA \distance{\structureA}{a}{\LUBsymbol^\structureA P} \displaypunctuation{,}
    \]
    and thus,
    by Definition~\ref{def:generalized_ultrametric_semilattice}.\ref{enum:generalized_ultrametric_semilattice_4_1},
    \begin{align*}
        a \meetsymbol^\structureA \LUBsymbol^\structureA P  &\sqsubseteqsymbol^\structureA  (\LUBsymbol^\structureA P) \meetsymbol^\structureA (\LUBsymbol^\structureA P) \meetsymbol^\structureA F(\LUBsymbol^\structureA P) \\
                                &=          (\LUBsymbol^\structureA P) \meetsymbol^\structureA F(\LUBsymbol^\structureA P) \displaypunctuation{.}
    \end{align*}
    However,
    since $a \in P$,
    $a \sqsubseteqsymbol^\structureA \LUBsymbol^\structureA P$,
    and thus,
    $a \meetsymbol^\structureA \LUBsymbol^\structureA P = a$.
    Thus,
    \begin{align*}
        a   &\sqsubseteqsymbol^\structureA  (\LUBsymbol^\structureA P) \meetsymbol^\structureA F(\LUBsymbol^\structureA P) \\
            &\sqsubseteqsymbol^\structureA  F(\LUBsymbol^\structureA P) \displaypunctuation{.}
    \end{align*}

    Thus,
    by generalization,
    $F(\LUBsymbol^\structureA P)$ is an upper bound of $P$ in $\pair{\carrier{\structureA}_\sortA}{\sqsubseteqsymbol^\structureA}$.
    And since $\LUBsymbol^\structureA P$ is the least upper bound of $P$ in $\pair{\carrier{\structureA}_\sortA}{\sqsubseteqsymbol^\structureA}$,
    $\LUBsymbol^\structureA P \sqsubseteqsymbol^\structureA F(\LUBsymbol^\structureA P)$.
    Thus,
    $\LUBsymbol^\structureA P$ is a post-fixed point of $F$.
\end{proof}

\begin{proof}[Proof of Theorem~\ref{thm:strictly_contracting_on_orbits_fixed-point_existence}]
    Suppose that $\structureA$ is directed-complete.
    
    Assume a contracting function $F$ on $\structureA$ that is strictly contracting on orbits.

    Let $P = \Set{a}{$a$ is a post-fixed point of $F$}$.

    Let $a$ be a member of $\carrier{\structureA}_\sortA$.

    By Lemma~\ref{lem:progression_of_post-fixed_points}.\ref{enum:progression_of_post-fixed_points_1},
    \[
        F(a) \meetsymbol^\structureA F(F(a)) \sqsubseteqsymbol^\structureA F(F(a) \meetsymbol^\structureA F(F(a))) \displaypunctuation{,}
    \]
    and thus,
    $P \neq \emptyset$.
    Then,
    by Kuratowski's Lemma
    (see \cite[sec.\,10.2]{Davey2002ITLAO}),
    every chain in $\pair{P}{\sqsubseteqsymbol^\structureA}$ is contained in a $\subset$-maximal chain in $\pair{P}{\sqsubseteqsymbol^\structureA}$.

    Let $C$ be a $\subset$-maximal chain in $\pair{P}{\sqsubseteqsymbol^\structureA}$.

    Since $\structureA$ is directed-complete,
    $C$ has a least upper bound in $\pair{\carrier{\structureA}_\sortA}{\sqsubseteqsymbol^\structureA}$.

    We claim that $\LUBsymbol^\structureA C$ is a fixed point of $F$.

    Suppose,
    toward contradiction,
    that $\LUBsymbol^\structureA C$ is not a fixed point of $F$.

    Let $x = F(\LUBsymbol^\structureA C) \meetsymbol^\structureA F(F(\LUBsymbol^\structureA C))$.

    By Lemma~\ref{lem:supremum_of_post-fixed_points},
    $\LUBsymbol^\structureA C \sqsubseteqsymbol^\structureA F(\LUBsymbol^\structureA C)$,
    and thus,
    by Lemma~\ref{lem:progression_of_post-fixed_points}.\ref{enum:progression_of_post-fixed_points_2},
    $\LUBsymbol^\structureA C \sqsubseteqsymbol^\structureA x$.

    Suppose,
    toward contradiction,
    that $\LUBsymbol^\structureA C = x$.
    Since $F$ is strictly contracting on orbits,
    and $\LUBsymbol^\structureA C$ is not a fixed point of $F$,
    \begin{equation} \label{eqn:strictly_contracting_on_orbits_fixed_point_existence_1}
        \distance{\structureA}{F(\LUBsymbol^\structureA C)}{F(F(\LUBsymbol^\structureA C))} \lsymbol^\structureA \distance{\structureA}{\LUBsymbol^\structureA C}{F(\LUBsymbol^\structureA C)} \displaypunctuation{.}
    \end{equation}
    However,
    since $x = F(\LUBsymbol^\structureA C) \meetsymbol^\structureA F(F(\LUBsymbol^\structureA C))$
    and $\LUBsymbol^\structureA C = x$,
    by Definition~\ref{def:generalized_ultrametric_semilattice}.\ref{enum:generalized_ultrametric_semilattice_4_2},
    \begin{align*}
        \distance{\structureA}{\LUBsymbol^\structureA C}{F(\LUBsymbol^\structureA C)}                          &=          \distance{\structureA}{F(\LUBsymbol^\structureA C)}{\LUBsymbol^\structureA C} \\
                                                                    &=          \distance{\structureA}{F(\LUBsymbol^\structureA C)}{F(\LUBsymbol^\structureA C) \meetsymbol^\structureA F(F(\LUBsymbol^\structureA C))} \\
                                                                    &=          \distance{\structureA}{F(\LUBsymbol^\structureA C) \meetsymbol^\structureA F(\LUBsymbol^\structureA C)}{F(\LUBsymbol^\structureA C) \meetsymbol^\structureA F(F(\LUBsymbol^\structureA C))} \\
                                                                    &\leqsymbol^\structureA  \distance{\structureA}{F(\LUBsymbol^\structureA C)}{F(F(\LUBsymbol^\structureA C))} \displaypunctuation{,}
    \end{align*}
    contrary to (\ref{eqn:strictly_contracting_on_orbits_fixed_point_existence_1}).

    Therefore,
    $\LUBsymbol^\structureA C \sqsubsetsymbol^\structureA x$.
    Thus,
    $x \not\in C$.
    And by Lemma~\ref{lem:progression_of_post-fixed_points}.\ref{enum:progression_of_post-fixed_points_1},
    $x \sqsubseteqsymbol^\structureA F(x)$,
    and thus,
    $x \in P$.
    Thus,
    $C \union \set{x}$ is a chain in $\pair{P}{\sqsubseteqsymbol^\structureA}$,
    and $C \subset C \union \set{x}$,
    contrary to $C$ being a $\subset$-maximal chain in $\pair{P}{\sqsubseteqsymbol^\structureA}$.

    Therefore,
    $\LUBsymbol^\structureA C$ is a fixed point of $F$.
\end{proof}

There are two things to notice here.
First,
the proof of Theorem~\ref{thm:strictly_contracting_on_orbits_fixed-point_existence} is inherently non-constructive,
overtly appealing to the Axiom of Choice through the use of Kuratowski's Lemma.
And second,
there need not be only one fixed point;
indeed,
the identity function on $\structureA$ is trivially causal and strictly contracting on orbits,
yet every element is a fixed point of it.

The following is immediate from Proposition~\ref{prop:if_strictly_contracting_then_contracting}, \ref{prop:if_strictly_contracting_then_at_most_one_fixed_point}, and \ref{prop:if_strictly_contracting_then_stritly_contracting_on_orbits}, 
and Theorem~\ref{thm:strictly_contracting_on_orbits_fixed-point_existence}:
\begin{theorem} \label{thm:strictly_contracting_fixed-point_existence}
    If\/ $\structureA$ is directed-complete, 
    then every strictly contracting function on\/ $\structureA$ has exactly one fixed point.
\end{theorem}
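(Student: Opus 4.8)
The plan is to derive Theorem~\ref{thm:strictly_contracting_fixed-point_existence} by chaining together the results already in hand, so essentially all the work has been front-loaded into Theorem~\ref{thm:strictly_contracting_on_orbits_fixed-point_existence} and the two propositions about strictly contracting functions. First I would fix a directed-complete $\structureA$ and an arbitrary strictly contracting function $F$ on it, and then produce a fixed point: by Proposition~\ref{prop:if_strictly_contracting_then_contracting}, $F$ is contracting, and by Proposition~\ref{prop:if_strictly_contracting_then_stritly_contracting_on_orbits}, $F$ is strictly contracting on orbits, so $F$ is a contracting function that is strictly contracting on orbits, and Theorem~\ref{thm:strictly_contracting_on_orbits_fixed-point_existence} applies to give at least one fixed point of $F$.

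Second I would establish uniqueness, which is even more immediate: Proposition~\ref{prop:if_strictly_contracting_then_at_most_one_fixed_point} says that a strictly contracting function has at most one fixed point. Combining ``at least one'' with ``at most one'' yields exactly one. So the whole proof is two sentences of bookkeeping.

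I do not expect any genuine obstacle here — the statement is flagged in the excerpt as ``immediate'' from the cited results, and the plan confirms that. If I had to name the ``hard part,'' it is simply making sure the hypotheses line up: that directed-completeness is exactly the premise Theorem~\ref{thm:strictly_contracting_on_orbits_fixed-point_existence} wants, and that the class ``contracting and strictly contracting on orbits'' is indeed implied by ``strictly contracting'' via the two propositions rather than, say, requiring some separate argument. Once those implications are invoked in the right order, nothing remains to be proved.

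Concretely, the proof I would write is: Suppose $\structureA$ is directed-complete, and let $F$ be a strictly contracting function on $\structureA$. By Proposition~\ref{prop:if_strictly_contracting_then_contracting}, $F$ is contracting, and by Proposition~\ref{prop:if_strictly_contracting_then_stritly_contracting_on_orbits}, $F$ is strictly contracting on orbits; hence, by Theorem~\ref{thm:strictly_contracting_on_orbits_fixed-point_existence}, $F$ has a fixed point. By Proposition~\ref{prop:if_strictly_contracting_then_at_most_one_fixed_point}, $F$ has at most one fixed point. Therefore $F$ has exactly one fixed point.
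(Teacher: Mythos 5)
Your proposal is correct and is exactly the paper's argument: the paper states the theorem as immediate from Propositions~\ref{prop:if_strictly_contracting_then_contracting}, \ref{prop:if_strictly_contracting_then_at_most_one_fixed_point}, and \ref{prop:if_strictly_contracting_then_stritly_contracting_on_orbits} together with Theorem~\ref{thm:strictly_contracting_on_orbits_fixed-point_existence}, which is precisely your chain of existence plus uniqueness. Nothing is missing.
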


If $\structureA$ is directed-complete,
then for every strictly contracting function $F$ on $\structureA$, 
we write $\fix F$ for the unique fixed point of $F$.

The following is immediate from Theorem~\ref{thm:non-empty_spherically_complete_if_and_only_if_strictly_contracting_principle} and \ref{thm:strictly_contracting_fixed-point_existence}:
\begin{corollary} \label{cor:if_directed-complete_subsemilattice_then_spherically_complete}
    \sloppy 
    If\/ $\pair{\carrier{\structureA}_\sortD}{\leqsymbol^\structureA}$ is totally ordered, 
    then if\/ $\structureA$ is directed-complete, 
    then\/ $\structureA$ is spherically complete.
\end{corollary}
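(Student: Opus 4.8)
The plan is to chain together the two fixed-point results already in hand, matching hypotheses carefully. First I would unpack the nested conditional: assume both that $\pair{\carrier{\structureA}_\sortD}{\leqsymbol^\structureA}$ is totally ordered and that $\structureA$ is directed-complete, and aim to conclude that $\structureA$ is spherically complete. Under the directed-completeness assumption, Theorem~\ref{thm:strictly_contracting_fixed-point_existence} applies without modification and tells us that every strictly contracting function on $\structureA$ has exactly one fixed point; in particular, every strictly contracting function on $\structureA$ has a fixed point.

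Next I would invoke Theorem~\ref{thm:non-empty_spherically_complete_if_and_only_if_strictly_contracting_principle}, whose sole hypothesis is that $\pair{\carrier{\structureA}_\sortD}{\leqsymbol^\structureA}$ be totally ordered, which is exactly our first standing assumption. That theorem gives the biconditional: $\structureA$ is spherically complete if and only if every strictly contracting function on $\structureA$ has a fixed point. Since the right-hand side of this biconditional was just established from directed-completeness, the left-hand side follows, so $\structureA$ is spherically complete, and the proof is complete.

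I expect no real obstacle here: the corollary is a one-step composition of the two cited theorems, and the only thing that genuinely needs checking is that their hypotheses line up with the two assumptions of the statement, which they do verbatim. All of the substantive work lies upstream, in the constructive argument behind Theorem~\ref{thm:strictly_contracting_fixed-point_existence} and in the Priess-Crampe–Ribenboim-type equivalence behind Theorem~\ref{thm:non-empty_spherically_complete_if_and_only_if_strictly_contracting_principle}; the present deduction is pure bookkeeping. If desired, I would add a sentence noting that this route establishes spherical completeness only \emph{non}-constructively via the total-order hypothesis, and does not by itself transfer the constructive content of Section~\ref{subsec:existence} to the spherically complete setting.
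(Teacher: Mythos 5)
Your proof is correct and matches the paper's route exactly: the paper derives this corollary as an immediate consequence of Theorem~\ref{thm:non-empty_spherically_complete_if_and_only_if_strictly_contracting_principle} and Theorem~\ref{thm:strictly_contracting_fixed-point_existence}, which is precisely your composition. The hypothesis matching you carry out is all that was needed.
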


We note that the hypothesis of $\pair{\tagset}{\preceq}$ being totally ordered in Corollary~\ref{cor:if_directed-complete_subsemilattice_then_spherically_complete} cannot be discarded 
(see \cite[exam.\,5.8]{Matsikoudis2013TFTOSCF}). 
As a consequence, 
Theorem~\ref{thm:if_strictly_contracting_then_unique_fixed_point} and \ref{thm:strictly_contracting_fixed-point_existence} are incomparable with respect to deduction;
that is,
one cannot deduce Theorem~\ref{thm:strictly_contracting_fixed-point_existence} from Theorem~\ref{thm:if_strictly_contracting_then_unique_fixed_point},
nor Theorem~\ref{thm:if_strictly_contracting_then_unique_fixed_point} from Theorem~\ref{thm:strictly_contracting_fixed-point_existence}.


\subsection{Construction} \label{subsec:construction}

Although theoretically pleasing,
mere existence of fixed points is practically moot.
Theorem~\ref{thm:strictly_contracting_on_orbits_fixed-point_existence} and \ref{thm:strictly_contracting_fixed-point_existence}, 
just like Theorem~\ref{thm:if_strictly_contracting_then_unique_fixed_point}, 
offer little if no means of deductive reasoning about the fixed points ascertained to exist.

But how are we to construct these fixed points? 
Theorem~A.2 and A.4 in \cite{Matsikoudis2013TFTOSCF} seem to render standard fixed-point theories of ordered sets and metric spaces more or less irrelevant. 
At the same time, 
it may well be that the relevant fixed-point theorem of Priess-Crampe and Ribenboim is independent of the theory of generalized ultrametric spaces in the classical Zermelo-Fraenkel set theory without choice, 
thus lacking a constructive proof altogether.\footnote{
    A purportedly constructive proof for the fixed-point theorem of Priess-Crampe and Ribenboim under the hypothesis of a totally ordered set of distances was presented in \cite[thm.\,1.3.9]{Hitzler2001GMATILPS}. 
    However, 
    the proof covertly appeals to the Axiom of Choice through a potentially transfinite sequence of choices.
}

The answer lies in the non-constructive proof of Theorem~\ref{thm:strictly_contracting_on_orbits_fixed-point_existence}.
Indeed,
the proof contains all the ingredients of a transfinite recursion facilitating the construction of a chain that may effectively substitute for the maximal one only asserted to exist therein by an appeal to Kuratowski's Lemma.
We may start with any arbitrary post-fixed point of the function $F$, 
and iterate through the function $\lambda a : \carrier{\structureA}_\sortA \mathrel{.} F(a) \meetsymbol^\structureA F(F(a))$ to form an ascending chain of such points. 
Every so often, 
we may take the supremum of all post-fixed points theretofore constructed, 
and resume the process therefrom, 
until no further progress can be made. 
Of course, 
the phrase ``every so often'' is to be interpreted rather liberally here, 
and certain groundwork is required before we can formalize its transfinite intent.

We henceforth assume some familiarity with transfinite set theory, 
and in particular, 
ordinal numbers. 
The unversed reader may refer to any introductory textbook on set theory for details 
(e.g., see \cite{Enderton1977EOST}).

We write $\OmTsymbol^\structureA F$ for a function on $\structureA$,  
such that for any $a \in \carrier{\structureA}_\sortA$, 
\[
    (\OmTsymbol^\structureA F)(a) = F(a) \meetsymbol^\structureA F(F(a)) \displaypunctuation{.}
\]

In other words, 
$\OmTsymbol^\structureA F$ is the function $\lambda a : \carrier{\structureA}_\sortA \mathrel{.} F(a) \meetsymbol^\structureA F(F(a))$. 

Assume a post-fixed point $a$ of $F$.

We let 
\[
    {(\OmTsymbol^\structureA F)}^0(a) = a \displaypunctuation{,}
\] 
for every ordinal $\alpha$, 
\[
    {(\OmTsymbol^\structureA F)}^{\alpha + 1}(a) = (\OmTsymbol^\structureA F)({(\OmTsymbol^\structureA F)}^\alpha(a)) \displaypunctuation{,}
\]
and for every limit ordinal $\lambda$, 
\[
    {(\OmTsymbol^\structureA F)}^\lambda(a) = \LUBsymbol^{\structureA} \Set{{(\OmTsymbol^\structureA F)}^\alpha(a)}{$\alpha \in \lambda$} \displaypunctuation{.}
\]

The following implies that for every ordinal $\alpha$, 
${(\OmTsymbol^\structureA F)}^\alpha(a)$ is well defined:
\begin{lemma} \label{lem:monotone_operation}
    If\/ $\structureA$ is directed-complete,
    then for every contracting function\/ $F$ on\/ $\structureA$,
    any post-fixed point\/ $a$ of\/ $F$,
    and every ordinal\/ $\alpha$,
    \begin{enumerate}

        \item \label{enum:monotone_operation_1}
            ${(\OmTsymbol^\structureA F)}^\alpha(a) \sqsubseteqsymbol^\structureA F({(\OmTsymbol^\structureA F)}^\alpha(a))$;

        \item \label{enum:monotone_operation_2}
            for any\/ $\beta \in \alpha$,
            ${(\OmTsymbol^\structureA F)}^\beta(a) \sqsubseteqsymbol^\structureA {(\OmTsymbol^\structureA F)}^\alpha(a)$.

    \end{enumerate}
\end{lemma}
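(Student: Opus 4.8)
The plan is to prove the lemma by transfinite induction on $\alpha$, establishing clauses \ref{enum:monotone_operation_1} and \ref{enum:monotone_operation_2} simultaneously while carrying along, as part of the induction hypothesis, the assertion that ${(\OmTsymbol^\structureA F)}^\alpha(a)$ is well defined. The reason for proving everything at once is that the value ${(\OmTsymbol^\structureA F)}^\lambda(a)$ at a limit ordinal $\lambda$ is declared to be $\LUBsymbol^{\structureA} \Set{{(\OmTsymbol^\structureA F)}^\alpha(a)}{$\alpha \in \lambda$}$, and this is meaningful only once one knows both that each ${(\OmTsymbol^\structureA F)}^\alpha(a)$ with $\alpha \in \lambda$ exists and that $\Set{{(\OmTsymbol^\structureA F)}^\alpha(a)}{$\alpha \in \lambda$}$ is directed in $\pair{\carrier{\structureA}_\sortA}{\sqsubseteqsymbol^\structureA}$, so that directed-completeness of $\structureA$ may be invoked to produce the least upper bound; and directedness of that set is precisely what clause \ref{enum:monotone_operation_2}, applied at the ordinals below $\lambda$, delivers.

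First I would handle the base case: ${(\OmTsymbol^\structureA F)}^0(a) = a$ is well defined, clause \ref{enum:monotone_operation_1} reduces to the assumption that $a$ is a post-fixed point of $F$, and clause \ref{enum:monotone_operation_2} is vacuous. For a successor $\alpha = \gamma + 1$, set $b = {(\OmTsymbol^\structureA F)}^\gamma(a)$; by the induction hypothesis $b$ is a well-defined post-fixed point of $F$, so ${(\OmTsymbol^\structureA F)}^{\gamma + 1}(a) = (\OmTsymbol^\structureA F)(b) = F(b) \meetsymbol^\structureA F(F(b))$ is well defined, and clause \ref{enum:monotone_operation_1} at $\gamma + 1$ is immediate from Lemma~\ref{lem:progression_of_post-fixed_points}.\ref{enum:progression_of_post-fixed_points_1}. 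For clause \ref{enum:monotone_operation_2}, Lemma~\ref{lem:progression_of_post-fixed_points}.\ref{enum:progression_of_post-fixed_points_2} gives $b \sqsubseteqsymbol^\structureA F(b) \meetsymbol^\structureA F(F(b)) = {(\OmTsymbol^\structureA F)}^{\gamma + 1}(a)$, while for any $\beta \in \gamma + 1$ the induction hypothesis (clause \ref{enum:monotone_operation_2} at $\gamma$, or reflexivity when $\beta = \gamma$) gives ${(\OmTsymbol^\structureA F)}^\beta(a) \sqsubseteqsymbol^\structureA b$; transitivity of $\sqsubseteqsymbol^\structureA$ then closes the case.

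For a limit ordinal $\alpha = \lambda$, the induction hypothesis says that each ${(\OmTsymbol^\structureA F)}^\gamma(a)$ with $\gamma \in \lambda$ is a well-defined post-fixed point of $F$ and that clause \ref{enum:monotone_operation_2} holds at every $\gamma \in \lambda$. The latter makes $\Set{{(\OmTsymbol^\structureA F)}^\gamma(a)}{$\gamma \in \lambda$}$ a chain — for $\gamma_1, \gamma_2 \in \lambda$ the larger index dominates the smaller — and it is non-empty since $\lambda$, being a limit ordinal, is non-zero, so $0 \in \lambda$. A chain is directed, so directed-completeness of $\structureA$ makes ${(\OmTsymbol^\structureA F)}^\lambda(a) = \LUBsymbol^{\structureA} \Set{{(\OmTsymbol^\structureA F)}^\gamma(a)}{$\gamma \in \lambda$}$ well defined; clause \ref{enum:monotone_operation_2} at $\lambda$ is then just the statement that each member of the set lies below its least upper bound, and clause \ref{enum:monotone_operation_1} at $\lambda$ follows from Lemma~\ref{lem:supremum_of_post-fixed_points}, since the set consists of post-fixed points of $F$ and has a least upper bound. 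The main obstacle is organizational rather than computational: one must not separate ``defining the iteration'' from ``proving its properties'', since the limit-stage definition is not even well posed without clause \ref{enum:monotone_operation_2} available below that stage. Once well-definedness, the chain (hence directed) character of the approximant set, and both clauses are bundled into a single transfinite induction, each individual step is a one-line appeal to Lemma~\ref{lem:progression_of_post-fixed_points}, Lemma~\ref{lem:supremum_of_post-fixed_points}, the directed-completeness hypothesis, or the transitivity of $\sqsubseteqsymbol^\structureA$.
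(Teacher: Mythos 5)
Your proposal is correct and follows essentially the same route as the paper's proof: a single transfinite induction proving both clauses jointly (with well-definedness at limit stages secured by clause~\ref{enum:monotone_operation_2} below, directedness, and directed-completeness), using Lemma~\ref{lem:progression_of_post-fixed_points} at successor stages and Lemma~\ref{lem:supremum_of_post-fixed_points} at limit stages. Your explicit bundling of well-definedness into the induction hypothesis is merely a more explicit rendering of what the paper does implicitly.
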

\begin{proof}
    Suppose that $\structureA$ is directed-complete.

    Assume a contracting function $F$ on $\structureA$,
    a post-fixed point $a$ of $F$,
    and an ordinal $\alpha$.

    We use transfinite induction on the ordinal $\alpha$ to jointly prove that \ref{enum:monotone_operation_1} and \ref{enum:monotone_operation_2} are true.

    If $\alpha = 0$,
    then ${(\OmTsymbol^\structureA F)}^\alpha(a) = a$.
    Thus,
    \ref{enum:monotone_operation_1} is trivially true,
    whereas \ref{enum:monotone_operation_2} is vacuously true.

    Suppose that there is an ordinal $\beta$
    such that $\alpha = \beta + 1$.

    Then
    \begin{align} \label{eqn:monotone_operation_1}
        {(\OmTsymbol^\structureA F)}^\alpha(a)    &=  (\OmTsymbol^\structureA F)({(\OmTsymbol^\structureA F)}^\beta(a)) \nonumber \\
                                &=  F({(\OmTsymbol^\structureA F)}^\beta(a)) \meetsymbol^\structureA F(F({(\OmTsymbol^\structureA F)}^\beta(a))) \displaypunctuation{.}
    \end{align}

    Thus,
    by Lemma~\ref{lem:progression_of_post-fixed_points}.\ref{enum:progression_of_post-fixed_points_1},
    \ref{enum:monotone_operation_1} is true.

    For every $\gamma \in \alpha$,
    either $\gamma = \beta$,
    or $\gamma \in \beta$,
    and thus,
    by the induction hypothesis,
    \begin{equation} \label{eqn:monotone_operation_2}
        {(\OmTsymbol^\structureA F)}^\gamma(a) \sqsubseteqsymbol^\structureA {(\OmTsymbol^\structureA F)}^\beta(a) \displaypunctuation{.}
    \end{equation}
    Also,
    by the induction hypothesis,
    \[
        {(\OmTsymbol^\structureA F)}^\beta(a) \sqsubseteqsymbol^\structureA F({(\OmTsymbol^\structureA F)}^\beta(a)) \displaypunctuation{.}
    \]
    Thus,
    by Lemma~\ref{lem:progression_of_post-fixed_points}.\ref{enum:progression_of_post-fixed_points_2} and (\ref{eqn:monotone_operation_1}),
    \begin{align} \label{eqn:monotone_operation_3}
        {(\OmTsymbol^\structureA F)}^\beta(a) &\sqsubseteqsymbol^\structureA  F({(\OmTsymbol^\structureA F)}^\beta(a)) \meetsymbol^\structureA F(F({(\OmTsymbol^\structureA F)}^\beta(a))) \nonumber \\
                            &=          {(\OmTsymbol^\structureA F)}^\alpha(a) \displaypunctuation{.}
    \end{align}
    And by (\ref{eqn:monotone_operation_2})
    and (\ref{eqn:monotone_operation_3}),
    ${(\OmTsymbol^\structureA F)}^\gamma(a) \sqsubseteqsymbol^\structureA {(\OmTsymbol^\structureA F)}^\alpha(a)$.
    Thus,
    \ref{enum:monotone_operation_2} is true.

    Otherwise,
    $\alpha$ is a limit ordinal.
    By the induction hypothesis,
    $\pair{\Set{{(\OmTsymbol^\structureA F)}^\beta(a)}{$\beta \in \alpha$}}{\sqsubseteqsymbol^\structureA}$ is totally ordered,
    and thus,
    $\Set{{(\OmTsymbol^\structureA F)}^\beta(a)}{$\beta \in \alpha$}$ is directed in $\pair{\carrier{\structureA}_\sortA}{\sqsubseteqsymbol^\structureA}$.
    And since $\structureA$ is directed-complete,
    $\Set{{(\OmTsymbol^\structureA F)}^\beta(a)}{$\beta \in \alpha$}$ has a least upper bound in $\pair{\carrier{\structureA}_\sortA}{\sqsubseteqsymbol^\structureA}$,
    and
    \[
        {(\OmTsymbol^\structureA F)}^\alpha(a) = \LUBsymbol^\structureA\Set{{(\OmTsymbol^\structureA F)}^\beta(a)}{$\beta \in \alpha$} \displaypunctuation{.}
    \]

    Thus,
    \ref{enum:monotone_operation_2} is trivially true.

    By the induction hypothesis,
    for every $\beta \in \alpha$,
    ${(\OmTsymbol^\structureA F)}^\beta(a) \sqsubseteqsymbol^\structureA F({(\OmTsymbol^\structureA F)}^\beta(a))$.
    Thus,
    by Lemma~\ref{lem:supremum_of_post-fixed_points},
    \ref{enum:monotone_operation_1} is true.
\end{proof}

By Lemma~\ref{lem:monotone_operation}.\ref{enum:monotone_operation_2}, 
and a simple cardinality argument, 
there is an ordinal $\alpha$
such that for every ordinal $\beta$ 
such that $\alpha \in \beta$,
${(\OmTsymbol^\structureA F)}^\beta(a) = {(\OmTsymbol^\structureA F)}^\alpha(a)$.
In fact, 
there is a least ordinal $\alpha$
such that for every contracting function $F$ on $\structureA$, 
any post-fixed point $a$ of $F$, 
and every ordinal $\beta$ 
such that $\alpha \in \beta$,
${(\OmTsymbol^\structureA F)}^\beta(a) = {(\OmTsymbol^\structureA F)}^\alpha(a)$.

We write $\oh \structureA$ for the least ordinal $\alpha$ 
such that there is no function $\varphi$ from $\alpha$ to $\carrier{\structureA}_\sortA$ 
such that for every $\beta, \gamma \in \alpha$, 
if $\beta \in \gamma$, 
then $\varphi(\beta) \sqsubsetsymbol^\structureA \varphi(\gamma)$. 

In other words, 
$\oh \structureA$ is the least ordinal that cannot be orderly embedded in $\pair{\carrier{\structureA}_\sortA}{\sqsubseteqsymbol^\structureA}$, 
which we may think of as the \emph{ordinal height} of $\structureA$. 
Notice that the Hartogs number of $\carrier{\structureA}_\sortA$ is an ordinal that cannot be orderly embedded in $\pair{\carrier{\structureA}_\sortA}{\sqsubseteqsymbol^\structureA}$, 
and thus, 
$\oh \structureA$ is well defined, 
and in particular, 
smaller than or equal to the Hartogs number of $\carrier{\structureA}_\sortA$.

\begin{lemma} \label{lem:ordinal_height}
    If\/ $\structureA$ is directed-complete,
    then for every contracting function\/ $F$ on\/ $\structureA$,
    any post-fixed point\/ $a$ of\/ $F$, 
    and every ordinal\/ $\alpha$,
    if\/ ${(\OmTsymbol^\structureA F)}^\alpha(a)$ is not a fixed point of\/ $\OmTsymbol^\structureA F$, 
    then\/ $\alpha + 2 \in \oh \structureA$.
\end{lemma}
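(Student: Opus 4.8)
The plan is to exhibit an explicit strictly order-preserving map from $\alpha + 2$ into $\pair{\carrier{\structureA}_\sortA}{\sqsubseteqsymbol^\structureA}$, namely $\varphi(\beta) = {(\OmTsymbol^\structureA F)}^\beta(a)$ for $\beta \in \alpha + 2$. Since $\oh \structureA$ is by definition the least ordinal admitting no such embedding, and since restricting an embedding to an initial segment is again an embedding, the existence of such a $\varphi$ forces $\alpha + 2 < \oh \structureA$: otherwise $\oh \structureA \subseteq \alpha + 2$ and $\varphi$ restricted to $\oh \structureA$ would contradict the minimality in the definition of $\oh \structureA$; hence $\alpha + 2 \in \oh \structureA$. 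By Lemma~\ref{lem:monotone_operation}.\ref{enum:monotone_operation_2} the map $\varphi$ is already order-preserving in the non-strict sense, so the entire content of the argument is to upgrade this to \emph{strict} monotonicity using the hypothesis that ${(\OmTsymbol^\structureA F)}^\alpha(a)$ is not a fixed point of $\OmTsymbol^\structureA F$.

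First I would record a stabilization fact: if ${(\OmTsymbol^\structureA F)}^\beta(a)$ is a fixed point of $\OmTsymbol^\structureA F$, then ${(\OmTsymbol^\structureA F)}^\gamma(a) = {(\OmTsymbol^\structureA F)}^\beta(a)$ for every $\gamma \geq \beta$. This is a routine transfinite induction on $\gamma$: the successor case is immediate from the definition of the iteration, and at a limit $\lambda > \beta$ one observes that ${(\OmTsymbol^\structureA F)}^\beta(a)$ is an upper bound of $\Set{{(\OmTsymbol^\structureA F)}^\delta(a)}{$\delta \in \lambda$}$ — it dominates the iterates with index below $\beta$ by Lemma~\ref{lem:monotone_operation}.\ref{enum:monotone_operation_2} and equals those with index in $[\beta, \lambda)$ by the induction hypothesis — while also being a member of that set, hence its least upper bound. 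Taking the contrapositive: if ${(\OmTsymbol^\structureA F)}^\alpha(a)$ is not a fixed point of $\OmTsymbol^\structureA F$, then ${(\OmTsymbol^\structureA F)}^\beta(a)$ is not a fixed point of $\OmTsymbol^\structureA F$ for any $\beta \leq \alpha$.

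With this in hand, strictness is clean. Fix $\beta, \gamma \in \alpha + 2$ with $\beta \in \gamma$; then $\beta \leq \alpha$, so ${(\OmTsymbol^\structureA F)}^\beta(a)$ is not a fixed point of $\OmTsymbol^\structureA F$, and since ${(\OmTsymbol^\structureA F)}^{\beta + 1}(a) = (\OmTsymbol^\structureA F)({(\OmTsymbol^\structureA F)}^\beta(a))$ this means ${(\OmTsymbol^\structureA F)}^\beta(a) \neq {(\OmTsymbol^\structureA F)}^{\beta + 1}(a)$; combined with Lemma~\ref{lem:monotone_operation}.\ref{enum:monotone_operation_2} this gives ${(\OmTsymbol^\structureA F)}^\beta(a) \sqsubsetsymbol^\structureA {(\OmTsymbol^\structureA F)}^{\beta + 1}(a)$. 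As $\beta \in \gamma$ yields $\beta + 1 \leq \gamma$, a further application of Lemma~\ref{lem:monotone_operation}.\ref{enum:monotone_operation_2} gives ${(\OmTsymbol^\structureA F)}^{\beta + 1}(a) \sqsubseteqsymbol^\structureA {(\OmTsymbol^\structureA F)}^\gamma(a)$, whence $\varphi(\beta) \sqsubsetsymbol^\structureA \varphi(\gamma)$, as required.

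I expect the only genuine subtlety to lie at the limit stages. Strictness of a chain is not preserved under suprema in general, so it is essential to factor the strict step $\beta \to \gamma$ through the single successor step $\beta \to \beta + 1$ (which lives inside the interval $[\beta, \gamma)$ precisely because $\beta \in \gamma$) and to bridge the remaining gap with ordinary monotonicity; likewise the stabilization fact is exactly what excludes the degenerate scenario in which the iteration silently goes constant before index $\alpha$ while ${(\OmTsymbol^\structureA F)}^\alpha(a)$ still fails to be a fixed point of $\OmTsymbol^\structureA F$. Everything else is bookkeeping with Lemma~\ref{lem:monotone_operation} and the definition of $\oh \structureA$.
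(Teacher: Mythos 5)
Your proof is correct and follows essentially the same route as the paper's: both establish that $\beta \mapsto {(\OmTsymbol^\structureA F)}^\beta(a)$ is a strict order-embedding of $\alpha+2$ into $\pair{\carrier{\structureA}_\sortA}{\sqsubseteqsymbol^\structureA}$ by combining Lemma~\ref{lem:monotone_operation}.\ref{enum:monotone_operation_2} with the observation (proved by an easy transfinite induction) that the iteration stabilizes once it reaches a fixed point of $\OmTsymbol^\structureA F$, which the non-fixedness of ${(\OmTsymbol^\structureA F)}^\alpha(a)$ rules out below $\alpha$. The only differences are organizational: you argue strictness directly via the contrapositive of the stabilization fact, where the paper argues injectivity by contradiction, and you spell out two details the paper leaves implicit (the stabilization induction and why an embedding of $\alpha+2$ forces $\alpha+2 \in \oh\structureA$).
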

\begin{proof}
    Suppose that $\structureA$ is directed-complete.

    Assume a contracting function $F$ on $\structureA$,
    a post-fixed point $a$ of $F$,
    and an ordinal $\alpha$.

    Suppose that ${(\OmTsymbol^\structureA F)}^\alpha(a)$ is not a fixed point of $\OmTsymbol^\structureA F$.

    We claim that for any $\beta, \gamma \in \alpha + 2$,
    if $\beta \neq \gamma$,
    then
    \[
        {(\OmTsymbol^\structureA F)}^\beta(a) \neq {(\OmTsymbol^\structureA F)}^\gamma(a) \displaypunctuation{.}
    \]

    Suppose,
    toward contradiction,
    that there are $\beta, \gamma \in \alpha + 2$
    such that $\beta \neq \gamma$,
    but
    \[
        {(\OmTsymbol^\structureA F)}^\beta(a) = {(\OmTsymbol^\structureA F)}^\gamma(a) \displaypunctuation{.}
    \]
    Without loss of generality,
    assume that $\beta \in \gamma$.
    Since $F$ is contracting,
    by Lemma~\ref{lem:monotone_operation}.\ref{enum:monotone_operation_2},
    \begin{align*}
        {(\OmTsymbol^\structureA F)}^\beta(a)   &\sqsubseteqsymbol^\structureA  {(\OmTsymbol^\structureA F)}^{\beta + 1}(a) \\
                            &\sqsubseteqsymbol^\structureA  {(\OmTsymbol^\structureA F)}^\gamma(a) \displaypunctuation{,}
    \end{align*}
    and thus,
    \[
        {(\OmTsymbol^\structureA F)}^\beta(a) = {(\OmTsymbol^\structureA F)}^{\beta + 1}(a) \displaypunctuation{.}
    \]
    And since $\beta \in \gamma \in \alpha + 2$,
    either $\beta \in \alpha$,
    or $\beta = \alpha$.
    Thus,
    by an easy transfinite induction,
    \[
        {(\OmTsymbol^\structureA F)}^\beta(a) = {(\OmTsymbol^\structureA F)}^\alpha(a) \displaypunctuation{,}
    \]
    contrary to the assumption that ${(\OmTsymbol^\structureA F)}^\alpha(a)$ is not a fixed point of $\OmTsymbol^\structureA F$.

    Therefore,
    for any $\beta, \gamma \in \alpha + 2$,
    \[
        {(\OmTsymbol^\structureA F)}^\beta(a) = {(\OmTsymbol^\structureA F)}^\gamma(a)
    \]
    if and only if
    $\beta = \gamma$.
    Thus,
    since $F$ is contracting,
    by Lemma~\ref{lem:monotone_operation}.\ref{enum:monotone_operation_2},
    there is a function $\varphi$ from $\alpha + 2$ to $\carrier{\structureA}_\sortA$
    such that for every $\beta, \gamma \in \alpha + 2$,
    if $\beta \in \gamma$,
    then $\varphi(\beta) \sqsubset^\structureA \varphi(\gamma)$.
    Thus,
    by definition of $\oh \structureA$,
    $\alpha + 2 \in \oh \structureA$.
\end{proof}

By Lemma~\ref{lem:ordinal_height}, 
${(\OmTsymbol^\structureA F)}^{\oh \structureA}(a)$ is a fixed point of $\OmTsymbol^\structureA F$. 
Nevertheless, 
${(\OmTsymbol^\structureA F)}^{\oh \structureA}(a)$ need not be a fixed point of $F$ as intended.
Indeed, 
the recursion process might start stuttering at points that are not fixed under the function in question 
(e.g., see \cite[exam.3.4]{Matsikoudis2013TFTOSCF}).
If the function is strictly contracting on orbits, 
however, 
progress at such points is guaranteed.

\begin{lemma} \label{lem:fixed-point_preservation}
    For every function\/ $F$ on\/ $\structureA$ that is strictly contracting on orbits,
    $a$ is a fixed point of\/ $F$
    if and only if\/ 
    $a$ is a fixed point of\/ $\OmTsymbol^\structureA F$. 
\end{lemma}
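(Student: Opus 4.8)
The plan is to prove the two implications separately, and only the nontrivial one requires any hypothesis on $F$.

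First I would dispatch the forward direction, which needs nothing beyond idempotence of $\meetsymbol^\structureA$ (Definition~\ref{def:generalized_ultrametric_semilattice}.\ref{enum:generalized_ultrametric_semilattice_1}): if $a$ is a fixed point of $F$, then $F(a) = a$ and hence $F(F(a)) = a$, so $(\OmTsymbol^\structureA F)(a) = F(a) \meetsymbol^\structureA F(F(a)) = a \meetsymbol^\structureA a = a$, and $a$ is a fixed point of $\OmTsymbol^\structureA F$. I would remark that this half does not use the assumption that $F$ is strictly contracting on orbits.

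For the converse, I would assume $a$ is a fixed point of $\OmTsymbol^\structureA F$, that is, $F(a) \meetsymbol^\structureA F(F(a)) = a$, and argue toward contradiction that $a \neq F(a)$ is impossible. Since $F$ is strictly contracting on orbits, $\distance{\structureA}{F(a)}{F(F(a))} \lsymbol^\structureA \distance{\structureA}{a}{F(a)}$. On the other hand, substituting $a = F(a) \meetsymbol^\structureA F(F(a))$ on the left argument and rewriting $F(a) = F(a) \meetsymbol^\structureA F(a)$ on the right, Definition~\ref{def:generalized_ultrametric_semilattice}.\ref{enum:generalized_ultrametric_semilattice_4_2} followed by symmetry gives
\[
    \distance{\structureA}{a}{F(a)} = \distance{\structureA}{F(a) \meetsymbol^\structureA F(F(a))}{F(a) \meetsymbol^\structureA F(a)} \leqsymbol^\structureA \distance{\structureA}{F(F(a))}{F(a)} = \distance{\structureA}{F(a)}{F(F(a))} \displaypunctuation{.}
\]
This is exactly the computation already carried out in the proof of Theorem~\ref{thm:strictly_contracting_on_orbits_fixed-point_existence}. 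Combined with antisymmetry of $\leqsymbol^\structureA$, the two relations force $\distance{\structureA}{a}{F(a)} = \distance{\structureA}{F(a)}{F(F(a))}$, contradicting the strict inequality. Hence $a = F(a)$, so $a$ is a fixed point of $F$.

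There is no genuine obstacle: both directions are short. The only mildly delicate point is bookkeeping of the arguments of $\dsymbol^\structureA$, which is handled cleanly by symmetry together with the idempotence trick $F(a) = F(a) \meetsymbol^\structureA F(a)$ that turns Definition~\ref{def:generalized_ultrametric_semilattice}.\ref{enum:generalized_ultrametric_semilattice_4_2} into the needed estimate; I would just be careful to invoke ``strictly contracting on orbits'' only in the backward direction.
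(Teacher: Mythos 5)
Your proposal is correct and takes essentially the same approach as the paper: the forward direction via idempotence of $\meetsymbol^\structureA$, and the backward direction by the same chain of equalities using Definition~\ref{def:generalized_ultrametric_semilattice}.\ref{enum:generalized_ultrametric_semilattice_4_2} (with the idempotence rewrite $F(a) = F(a) \meetsymbol^\structureA F(a)$) contradicted by strict contraction on orbits. The only difference is cosmetic ordering: the paper derives the inequality $\distance{\structureA}{a}{F(a)} \leqsymbol^\structureA \distance{\structureA}{F(a)}{F(F(a))}$ first and then supposes $a \neq F(a)$, whereas you assume $a \neq F(a)$ at the outset.
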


\begin{proof}
    Assume a function $F$ on $\structureA$ that is strictly contracting on orbits.

    If $a$ is a fixed point of $F$,
    then
    \begin{align*}
        a   &=  F(a) \\
            &=  F(F(a)) \displaypunctuation{,}
    \end{align*}
    and thus,
    \begin{align*}
        a   &=  F(a) \meetsymbol^\structureA F(F(a)) \\
            &=  (\OmTsymbol^\structureA F)(a) \displaypunctuation{.}
    \end{align*}

    Conversely,
    suppose that $a$ is a fixed point of $\OmTsymbol^\structureA F$.

    Then,
    by Definition~\ref{def:generalized_ultrametric_semilattice}.\ref{enum:generalized_ultrametric_semilattice_4_2},
    \begin{align} \label{eqn:fixed-point_preservation_1}
        \distance{\structureA}{a}{F(a)} &=                      \distance{\structureA}{(\OmTsymbol^\structureA F)(a)}{F(a)} \nonumber \\
                                        &=                      \distance{\structureA}{F(a) \meetsymbol^\structureA F(F(a))}{F(a)} \nonumber \\
                                        &=                      \distance{\structureA}{F(a) \meetsymbol^\structureA F(F(a))}{F(a) \meetsymbol^\structureA F(a)} \nonumber \\
                                        &\leqsymbol^\structureA \distance{\structureA}{F(a)}{F(F(a))} \displaypunctuation{.}
    \end{align}

    Suppose,
    toward contradiction,
    that $a$ is not a fixed point of $F$.
    Then,
    since $F$ is strictly contracting on orbits,
    \[
        \distance{\structureA}{F(a)}{F(F(a))} \lsymbol^\structureA \distance{\structureA}{a}{F(a)} \displaypunctuation{,}
    \]
    contrary to (\ref{eqn:fixed-point_preservation_1}).

    Therefore,
    $a$ is a fixed point of $F$.
\end{proof}

We may at last put all the different pieces together to obtain a constructive version of Theorem~\ref{thm:strictly_contracting_on_orbits_fixed-point_existence}.

\begin{theorem} \label{thm:strictly_contracting_on_orbits_fixed_point}
    If\/ $\structureA$ is directed-complete,
    then for every contracting function\/ $F$ on\/ $\structureA$ that is strictly contracting on orbits,
    and any post-fixed point\/ $a$ of\/ $F$,
    ${(\OmTsymbol^\structureA F)}^{\oh \structureA}(a)$ is a fixed point of\/ $F$.
\end{theorem}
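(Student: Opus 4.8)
The plan is to assemble the three preceding lemmas into a short argument by contradiction, with essentially all the substance already discharged upstream. First I would fix a directed-complete $\structureA$, a contracting function $F$ on $\structureA$ that is strictly contracting on orbits, and a post-fixed point $a$ of $F$. By Lemma~\ref{lem:monotone_operation} (using that $F$ is contracting and $a$ is a post-fixed point), the transfinite iterate ${(\OmTsymbol^\structureA F)}^{\oh \structureA}(a)$ is well defined, so the statement at least makes sense.

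Next I would suppose, toward contradiction, that ${(\OmTsymbol^\structureA F)}^{\oh \structureA}(a)$ is \emph{not} a fixed point of $F$. Since $F$ is strictly contracting on orbits, Lemma~\ref{lem:fixed-point_preservation} then gives that ${(\OmTsymbol^\structureA F)}^{\oh \structureA}(a)$ is not a fixed point of $\OmTsymbol^\structureA F$ either. Applying Lemma~\ref{lem:ordinal_height} with the ordinal $\alpha$ taken to be $\oh \structureA$ itself yields $\oh \structureA + 2 \in \oh \structureA$. This is absurd: since $\oh \structureA \in \oh \structureA + 1 \in \oh \structureA + 2$, transitivity of the ordinal $\oh \structureA$ would force $\oh \structureA \in \oh \structureA$, which no ordinal satisfies. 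This contradiction establishes that ${(\OmTsymbol^\structureA F)}^{\oh \structureA}(a)$ is a fixed point of $\OmTsymbol^\structureA F$.

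Finally I would invoke the other direction of Lemma~\ref{lem:fixed-point_preservation}: for a function strictly contracting on orbits, every fixed point of $\OmTsymbol^\structureA F$ is a fixed point of $F$. Hence ${(\OmTsymbol^\structureA F)}^{\oh \structureA}(a)$ is a fixed point of $F$, which is exactly the claim.

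I do not expect a genuine obstacle here, since the heavy lifting lives in Lemma~\ref{lem:monotone_operation} (well-definedness and monotonicity of the iteration), Lemma~\ref{lem:ordinal_height} (the stabilization bound), and Lemma~\ref{lem:fixed-point_preservation} (passing between fixed points of $F$ and of $\OmTsymbol^\structureA F$). The only point meriting a moment's care is the purely set-theoretic observation that substituting $\alpha = \oh \structureA$ into Lemma~\ref{lem:ordinal_height} is contradictory — that is, recognizing that $\oh \structureA$ is calibrated precisely so that the recursion cannot fail to have stabilized at a true fixed point by stage $\oh \structureA$.
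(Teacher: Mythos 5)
Your proposal is correct and follows essentially the same route as the paper: derive from Lemma~\ref{lem:ordinal_height} that non-stabilization would force $\oh \structureA + 2 \in \oh \structureA$, conclude that ${(\OmTsymbol^\structureA F)}^{\oh \structureA}(a)$ is a fixed point of $\OmTsymbol^\structureA F$, and transfer to a fixed point of $F$ via Lemma~\ref{lem:fixed-point_preservation}. The only difference is cosmetic (you phrase the contradiction starting from ``not a fixed point of $F$'' rather than ``not a fixed point of $\OmTsymbol^\structureA F$''), which does not affect correctness.
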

\begin{proof}
    Suppose that $\structureA$ is directed-complete.

    Assume a contracting function $F$ on $\structureA$ that is strictly contracting on orbits,
    and a post-fixed point $a$ of $F$.

    Suppose,
    toward contradiction,
    that ${(\OmTsymbol^\structureA F)}^{\oh \structureA}(a)$ is not a fixed point of $\OmTsymbol^\structureA F$.
    Then,
    by Lemma~\ref{lem:ordinal_height},
    $\oh \structureA + 2 \in \oh \structureA$,
    a contradiction.

    Therefore,
    ${(\OmTsymbol^\structureA F)}^{\oh \structureA}(a)$ is a fixed point of $\OmTsymbol^\structureA F$.
    And since $F$ is strictly contracting on orbits,
    by Lemma~\ref{lem:fixed-point_preservation},
    ${(\OmTsymbol^\structureA F)}^{\oh \structureA}(a)$ is a fixed point of $F$.
\end{proof}

To be pedantic, 
Theorem~\ref{thm:strictly_contracting_on_orbits_fixed_point} does not directly prove that $F$ has a fixed point; 
unless there is a post-fixed point of $F$, 
the theorem is true vacuously. 
But by Lemma~\ref{lem:progression_of_post-fixed_points}.\ref{enum:progression_of_post-fixed_points_1}, 
for every $a \in \carrier{\structureA}_\sortA$, 
$(\OmTsymbol^\structureA F)(a)$ is a post-fixed point of $F$.

The following is immediate from Proposition~\ref{prop:if_strictly_contracting_then_contracting} and \ref{prop:if_strictly_contracting_then_stritly_contracting_on_orbits}, 
Lemma~\ref{lem:progression_of_post-fixed_points}.\ref{enum:progression_of_post-fixed_points_1}, 
and Theorem~\ref{thm:strictly_contracting_on_orbits_fixed_point}:
\begin{theorem} \label{thm:strictly_contracting_fixed_point_I}
    If\/ $\structureA$ is directed-complete,
    then for every strictly contracting function\/ $F$ on\/ $\structureA$, 
    and every\/ $a \in \carrier{\structureA}_\sortA$,
    \[
        \fix F = {(\OmTsymbol^\structureA F)}^{\oh \structureA}((\OmTsymbol^\structureA F)(a)) \displaypunctuation{.}
    \]
\end{theorem}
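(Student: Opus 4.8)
The plan is simply to chain together results already established, since the theorem is essentially an assembly statement. First I would suppose that $\structureA$ is directed-complete and fix a strictly contracting function $F$ on $\structureA$ together with an arbitrary element $a \in \carrier{\structureA}_\sortA$. By Proposition~\ref{prop:if_strictly_contracting_then_contracting}, $F$ is contracting, and by Proposition~\ref{prop:if_strictly_contracting_then_stritly_contracting_on_orbits}, $F$ is strictly contracting on orbits; hence $F$ satisfies the hypotheses of Theorem~\ref{thm:strictly_contracting_on_orbits_fixed_point}.

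Next I would observe that $(\OmTsymbol^\structureA F)(a) = F(a) \meetsymbol^\structureA F(F(a))$ is, by Lemma~\ref{lem:progression_of_post-fixed_points}.\ref{enum:progression_of_post-fixed_points_1}, a post-fixed point of $F$. Applying Theorem~\ref{thm:strictly_contracting_on_orbits_fixed_point} with $(\OmTsymbol^\structureA F)(a)$ in the role of the post-fixed point then yields that ${(\OmTsymbol^\structureA F)}^{\oh \structureA}((\OmTsymbol^\structureA F)(a))$ is a fixed point of $F$.

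Finally, since $F$ is strictly contracting, Theorem~\ref{thm:strictly_contracting_fixed-point_existence} guarantees that $F$ has exactly one fixed point, namely $\fix F$; therefore ${(\OmTsymbol^\structureA F)}^{\oh \structureA}((\OmTsymbol^\structureA F)(a)) = \fix F$, which is the desired equality. There is no substantive obstacle here: the only point requiring a moment's care is that the notation $\fix F$ is well-defined at all, and that is precisely the content of Theorem~\ref{thm:strictly_contracting_fixed-point_existence}; everything else is a direct substitution into the quoted lemmas, propositions, and theorem.
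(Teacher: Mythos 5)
Your proposal is correct and matches the paper's own justification, which derives this theorem as immediate from Proposition~\ref{prop:if_strictly_contracting_then_contracting}, Proposition~\ref{prop:if_strictly_contracting_then_stritly_contracting_on_orbits}, Lemma~\ref{lem:progression_of_post-fixed_points}.\ref{enum:progression_of_post-fixed_points_1}, and Theorem~\ref{thm:strictly_contracting_on_orbits_fixed_point}, with uniqueness of the fixed point (Theorem~\ref{thm:strictly_contracting_fixed-point_existence}) underpinning the notation $\fix F$ exactly as you note.
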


This construction of fixed points as ``limits of stationary transfinite iteration sequences'' is very similar to the construction of extremal fixed points of monotone operators in \cite{Cousot1979CVOTFPT} and references therein, 
where the function iterated is not $\OmTsymbol^\structureA F$, 
but $F$ itself.
Notice, 
however, 
that if $F$ preserves $\sqsubseteqsymbol^\structureA$, 
then for any post-fixed point $a$ of $F$,
$(\OmTsymbol^\structureA F)(a) = F(a)$.

The astute reader will at this point anticipate the following:
\begin{theorem} \label{thm:strictly_contracting_fixed_point_II}
    If\/ $\structureA$ is directed-complete,
    then for every strictly contracting function\/ $F$ on\/ $\structureA$,
    \[
        \fix F = \LUBsymbol^\structureA \Set{a}{$a$ is a post-fixed point of $F$} \displaypunctuation{.}
    \]
\end{theorem}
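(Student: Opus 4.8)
The plan is to prove that $\fix F$ is the greatest element of the set $P = \Set{a}{$a$ is a post-fixed point of $F$}$; the claim then follows immediately, since a set with a greatest element is directed, and its greatest element is its least upper bound, so that $\LUBsymbol^\structureA P$ is defined and equals that greatest element. Note first that, since $F$ is strictly contracting, it is contracting by Proposition~\ref{prop:if_strictly_contracting_then_contracting} and strictly contracting on orbits by Proposition~\ref{prop:if_strictly_contracting_then_stritly_contracting_on_orbits}, and that, since $\structureA$ is directed-complete, $\fix F$ is well defined by Theorem~\ref{thm:strictly_contracting_fixed-point_existence}.

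First I would check that $\fix F \in P$. Indeed, $\fix F = F(\fix F)$, so $\fix F \meetsymbol^\structureA F(\fix F) = \fix F \meetsymbol^\structureA \fix F = \fix F$, that is, $\fix F \sqsubseteqsymbol^\structureA F(\fix F)$; hence $\fix F$ is a post-fixed point of $F$.

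The main step is to show that $\fix F$ is an upper bound of $P$ in $\pair{\carrier{\structureA}_\sortA}{\sqsubseteqsymbol^\structureA}$, and here the key idea is to run the transfinite iteration of $\OmTsymbol^\structureA F$ not from some auxiliary point but from the given post-fixed point itself. So let $a$ be an arbitrary post-fixed point of $F$. By Theorem~\ref{thm:strictly_contracting_on_orbits_fixed_point}, ${(\OmTsymbol^\structureA F)}^{\oh \structureA}(a)$ is a fixed point of $F$, and therefore, by Proposition~\ref{prop:if_strictly_contracting_then_at_most_one_fixed_point}, ${(\OmTsymbol^\structureA F)}^{\oh \structureA}(a) = \fix F$. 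Since there is (trivially) a function from $0$ to $\carrier{\structureA}_\sortA$ meeting the condition in the definition of $\oh \structureA$ — namely the empty function, the condition being vacuous — we have $\oh \structureA \neq 0$, so $0 \in \oh \structureA$, and thus, by Lemma~\ref{lem:monotone_operation}.\ref{enum:monotone_operation_2}, $a = {(\OmTsymbol^\structureA F)}^0(a) \sqsubseteqsymbol^\structureA {(\OmTsymbol^\structureA F)}^{\oh \structureA}(a) = \fix F$.

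Putting the pieces together, $\fix F$ is a member of $P$ that is also an upper bound of $P$, hence its greatest element; so $P$ is directed, $\LUBsymbol^\structureA P$ is defined, and $\LUBsymbol^\structureA P = \fix F$, which is exactly the assertion of the theorem. The only substantive point in the whole argument is the inequality $a \sqsubseteqsymbol^\structureA \fix F$ for an arbitrary post-fixed point $a$; everything else is routine. No new appeal to the Axiom of Choice is made beyond the one already hidden in Theorem~\ref{thm:strictly_contracting_on_orbits_fixed_point} through the existence of $\oh \structureA$ and the stabilization of the iteration sequence.
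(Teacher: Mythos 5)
Your proposal is correct and follows essentially the same route as the paper: show that $\fix F$ is an upper bound of the set of post-fixed points by iterating $\OmTsymbol^\structureA F$ from an arbitrary post-fixed point $a$ up to $\oh\structureA$ (using Lemma~\ref{lem:monotone_operation}.\ref{enum:monotone_operation_2} and Theorem~\ref{thm:strictly_contracting_on_orbits_fixed_point}, with uniqueness identifying the limit as $\fix F$), and then observe that $\fix F$ is itself a post-fixed point, hence the greatest element and so the least upper bound. Your explicit checks that $0 \in \oh\structureA$ and that uniqueness is what identifies ${(\OmTsymbol^\structureA F)}^{\oh\structureA}(a)$ with $\fix F$ only spell out details the paper leaves implicit.
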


\begin{proof}
    Suppose that $\structureA$ is directed-complete.

    Assume a strictly contracting function $F$ on $\structureA$.

    Assume a post-fixed point $a$ of $F$.

    By Lemma~\ref{lem:monotone_operation}.\ref{enum:monotone_operation_2},
    $a \sqsubseteqsymbol^\structureA {(\OmTsymbol^\structureA F)}^{\oh \structureA}(a)$,
    and thus,
    since $F$ is strictly contracting,
    by Proposition~\ref{prop:if_strictly_contracting_then_contracting} and \ref{prop:if_strictly_contracting_then_stritly_contracting_on_orbits},
    Lemma~\ref{lem:monotone_operation}.\ref{enum:monotone_operation_2},
    and Theorem~\ref{thm:strictly_contracting_on_orbits_fixed_point},
    $a \sqsubseteqsymbol^\structureA \fix F$.

    Thus,
    by generalization,
    $\fix F$ is an upper bound of $\Set{a}{$a$ is a post-fixed point of $F$}$ in $\pair{\carrier{\structureA}_\sortA}{\sqsubseteqsymbol^\structureA}$.
    And since $\fix F$ is a post-fixed point of $F$,
    for every upper bound $u$ of $\Set{a}{$a$ is a post-fixed point of $F$}$ in $\pair{\carrier{\structureA}_\sortA}{\sqsubseteqsymbol^\structureA}$,
    $\fix F \sqsubseteqsymbol^\structureA u$.
    Thus,
    \[
        \fix F = \LUBsymbol^\structureA\Set{a}{$a$ is a post-fixed point of $F$} \displaypunctuation{.} \qedhere
    \]
\end{proof}

In retrospect,
we find that Theorem~\ref{thm:strictly_contracting_fixed_point_II}
may be derived directly from first principles.
In particular,
and under the premise of the corollary,
it is easy to establish without any use of Theorem~\ref{thm:strictly_contracting_on_orbits_fixed_point} that for every $a \in \carrier{\structureA}_\sortA$,
$a \sqsubseteqsymbol^\structureA \fix F$
if and only if
$a \sqsubseteqsymbol^\structureA F(a)$, 
as the reader may wish to verify.

The construction of Theorem~\ref{thm:strictly_contracting_fixed_point_II} is identical in form to Tarski's well known construction of greatest fixed points of order-preserving functions on complete lattices 
(see \cite[thm.\,1]{Tarski1955ALFTAIA}).

Finally, 
we note that $\OmTsymbol^\structureA F$ is not, 
in general, 
order-preserving under the above premises 
(see \cite[exam.\,2.15]{Matsikoudis2013TFTOSCF}), 
as might be suspected, 
and thus, 
our fixed-point theorem is not a reduction to a standard order-theoretic one.

In view of Example~\ref{exam:discrete-event_real-time_signals} and \ref{exam:Herbrand_interpretations}, 
and the comments in the paragraph following Proposition~\ref{prop:if_strictly_contracting_then_contracting}, 
Theorem~\ref{thm:strictly_contracting_fixed_point_I} and \ref{thm:strictly_contracting_fixed_point_II} can be directly applied to study the behaviour of strictly causal discrete-event components in feedback
(see \cite{Matsikoudis2013TFTOSCF}, \cite{Matsikoudis2013aOFPOSCF}), 
and obtain, 
constructively,
the unique supported model of locally hierarchical normal logic programs
(see \cite{Hitzler2003GMAUDLP}).


\subsection{Induction} \label{subsec:induction}

Having used transfinite recursion to construct fixed points, 
we may use transfinite induction to prove properties of them.
And in the case of strictly contracting endofunctions, 
which have exactly one fixed point, 
we may use Theorem~\ref{thm:strictly_contracting_fixed_point_I} to establish a special proof rule.

Assume $P \subseteq \carrier{\structureA}_\sortA$.

We say that $P$ is \emph{strictly inductive} 
if and only if 
every non-empty chain in $\pair{P}{\sqsubseteqsymbol^\structureA}$ has a least upper bound in $\pair{P}{\sqsubseteqsymbol^\structureA}$.

Note that $P$ is strictly inductive 
if and only if 
$\pair{P}{\sqsubseteqsymbol^\structureA}$ is directed-complete
(see \cite[cor.\,2]{Markowsky1976CPADSWA}). 

\begin{theorem} \label{thm:fixed-point_induction}
    If\/ $\structureA$ is directed-complete,
    then for every strictly contracting function\/ $F$ on\/ $\structureA$,
    and every non-empty, strictly inductive\/ $P \subseteq \carrier{\structureA}_\sortA$, 
    if for every\/ $a \in P$, 
    $(\OmTsymbol^\structureA F)(a) \in P$, 
    then\/ $\fix F \in P$.
\end{theorem}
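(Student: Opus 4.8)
The plan is to fix a directed-complete $\structureA$, a strictly contracting function $F$, and a non-empty, strictly inductive $P \subseteq \carrier{\structureA}_\sortA$ closed under $\OmTsymbol^\structureA F$, and to show $\fix F \in P$ by producing $\fix F$ as the value of a transfinite iteration that stays inside $P$ at every stage. Concretely, I would pick any $b \in P$ (using $P \neq \emptyset$), set $a = (\OmTsymbol^\structureA F)(b)$, and note by Lemma~\ref{lem:progression_of_post-fixed_points}.\ref{enum:progression_of_post-fixed_points_1} that $a$ is a post-fixed point of $F$; moreover $a \in P$ since $P$ is closed under $\OmTsymbol^\structureA F$. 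By Theorem~\ref{thm:strictly_contracting_fixed_point_I} (noting $F$ is contracting and strictly contracting on orbits by Proposition~\ref{prop:if_strictly_contracting_then_contracting} and Proposition~\ref{prop:if_strictly_contracting_then_stritly_contracting_on_orbits}), we have $\fix F = (\OmTsymbol^\structureA F)^{\oh \structureA}(a)$. So it suffices to prove that $(\OmTsymbol^\structureA F)^\alpha(a) \in P$ for every ordinal $\alpha$, and in particular for $\alpha = \oh \structureA$.

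The core of the argument is this transfinite induction on $\alpha$, carried out inside $P$. The base case $\alpha = 0$ is immediate since $(\OmTsymbol^\structureA F)^0(a) = a \in P$. For the successor case $\alpha = \beta + 1$: by the induction hypothesis $(\OmTsymbol^\structureA F)^\beta(a) \in P$, and since $P$ is closed under $\OmTsymbol^\structureA F$, $(\OmTsymbol^\structureA F)^{\beta+1}(a) = (\OmTsymbol^\structureA F)((\OmTsymbol^\structureA F)^\beta(a)) \in P$. For the limit case $\alpha$ a limit ordinal: by the induction hypothesis, $\Set{(\OmTsymbol^\structureA F)^\beta(a)}{$\beta \in \alpha$} \subseteq P$, and by Lemma~\ref{lem:monotone_operation}.\ref{enum:monotone_operation_2} this set is a chain in $\pair{P}{\sqsubseteqsymbol^\structureA}$, non-empty because $\alpha \neq 0$; since $P$ is strictly inductive, this chain has a least upper bound $u$ in $\pair{P}{\sqsubseteqsymbol^\structureA}$. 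The one subtlety to pin down is that $u$ coincides with $(\OmTsymbol^\structureA F)^\alpha(a)$, which is defined as the least upper bound in the ambient $\pair{\carrier{\structureA}_\sortA}{\sqsubseteqsymbol^\structureA}$: a least upper bound computed in the sub-poset $P$ need not in general be a least upper bound in the larger poset. I would resolve this by invoking the observation recorded just before the theorem statement, namely that $P$ strictly inductive is equivalent to $\pair{P}{\sqsubseteqsymbol^\structureA}$ being directed-complete, together with the fact (implicit in that remark, via \cite{Markowsky1976CPADSWA}) that suprema of chains are computed the same way in $P$ as in $\structureA$; more robustly, one shows directly that $u$ is a least upper bound of the chain in $\pair{\carrier{\structureA}_\sortA}{\sqsubseteqsymbol^\structureA}$ by checking that any upper bound in the big poset dominates each chain element, and then — since the chain elements are post-fixed points whose supremum in the ambient poset exists — comparing with $u$. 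Because $\LUBsymbol^\structureA$ denotes the unique least upper bound in $\pair{\carrier{\structureA}_\sortA}{\sqsubseteqsymbol^\structureA}$, it then follows that $(\OmTsymbol^\structureA F)^\alpha(a) = u \in P$, completing the induction.

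With the induction done, setting $\alpha = \oh \structureA$ gives $(\OmTsymbol^\structureA F)^{\oh \structureA}(a) \in P$, and since this element equals $\fix F$, we conclude $\fix F \in P$, as required. The main obstacle is the limit-case reconciliation between the supremum taken within $P$ and the supremum taken in the ambient semilattice; everything else is a routine unwinding of the iteration together with closure and monotonicity properties already established in Lemma~\ref{lem:monotone_operation} and Lemma~\ref{lem:progression_of_post-fixed_points}. It is worth noting that the argument never needs the hypothesis ``$(\OmTsymbol^\structureA F)(a) \in P$ for all $a \in P$'' in full generality — only along the particular iteration sequence — but stating it for all of $P$ is the natural and usable form of the induction principle.
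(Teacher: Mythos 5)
Your proposal has the same skeleton as the paper's proof: pick $a \in P$, pass to the post-fixed point $(\OmTsymbol^\structureA F)(a)$ via Lemma~\ref{lem:progression_of_post-fixed_points}.\ref{enum:progression_of_post-fixed_points_1}, show by transfinite induction (base, successor, limit, using closure under $\OmTsymbol^\structureA F$ and Lemma~\ref{lem:monotone_operation}.\ref{enum:monotone_operation_2}) that every iterate lies in $P$, and conclude with Theorem~\ref{thm:strictly_contracting_fixed_point_I}. Up to the limit case, the two arguments coincide.

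The one place where you go beyond the paper is the limit case, and there your resolution does not hold up. You are right that $(\OmTsymbol^\structureA F)^\alpha(\cdot)$ is by definition the least upper bound in the ambient ordered set $\pair{\carrier{\structureA}_\sortA}{\sqsubseteqsymbol^\structureA}$, whereas strict inductivity, as defined, only provides a least upper bound of the chain in $\pair{P}{\sqsubseteqsymbol^\structureA}$. But neither of your suggested repairs closes that gap: Markowsky's equivalence of chain-completeness and directed-completeness is a statement about the poset $P$ in isolation and says nothing about how its suprema relate to those of $\structureA$; and your ``more robust'' argument only uses that ambient upper bounds dominate the chain elements, which cannot show that the supremum $u$ computed in $P$ is least among \emph{ambient} upper bounds. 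In general it is not: in the structure of Example~\ref{exam:discrete-event_real-time_signals}, take the chain of proper prefixes of a signal $s$ whose ambient supremum is $s$, together with a single proper extension $z$ of $s$; that set is strictly inductive, the supremum of the chain within it is $z$, yet the ambient supremum is $s$. So the identification you need is not a general order-theoretic fact about strictly inductive subsets; if it holds in the situation at hand, it needs an argument using the closure hypothesis, which you do not give. (To be fair, the paper's own proof makes the same identification tacitly, in effect reading ``strictly inductive'' as ``the ambient least upper bound of every non-empty chain in $P$ lies in $P$''; under that reading your proof and the paper's go through verbatim and are essentially identical.) If you want the theorem under the intrinsic reading you actually state, a genuine fix is available but takes more work: observe that the proof of Lemma~\ref{lem:supremum_of_post-fixed_points} in fact shows that every post-fixed point $a$ with $a \sqsubseteqsymbol^\structureA w$ satisfies $a \sqsubseteqsymbol^\structureA w \meetsymbol^\structureA F(w)$, and hence $a \sqsubseteqsymbol^\structureA (\OmTsymbol^\structureA F)(w)$ (leastness of the upper bound is used there only in the very last step); then iterate inside $P$, taking at limit stages the supremum $u$ furnished by $P$ rather than the ambient one. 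Since $(\OmTsymbol^\structureA F)(u) \in P$ is then an upper bound of the chain, leastness of $u$ in $\pair{P}{\sqsubseteqsymbol^\structureA}$ gives $u \sqsubseteqsymbol^\structureA (\OmTsymbol^\structureA F)(u)$, so the $P$-internal sequence keeps increasing, must stabilize by a cardinality argument at a fixed point of $\OmTsymbol^\structureA F$ lying in $P$, and that fixed point is $\fix F$ by Lemma~\ref{lem:fixed-point_preservation}.
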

\begin{proof}
    Suppose that $\structureA$ is directed-complete.

    Assume a strictly contracting function $F$ on $\structureA$,
    and non-empty, strictly inductive $P \subseteq \carrier{\structureA}_\sortA$.

    Suppose that for every $a \in P$,
    $(\OmTsymbol^\structureA F)(a) \in P$.

    Let $a$ be a member of $P$.

    By Lemma~\ref{lem:progression_of_post-fixed_points}.\ref{enum:progression_of_post-fixed_points_1},
    $(\OmTsymbol^\structureA F)(a)$ is a post-fixed point of $F$.

    We use transfinite induction to prove that for every ordinal $\alpha$,
    ${(\OmTsymbol^\structureA F)}^\alpha((\OmTsymbol^\structureA F)(a)) \in P$.

    If $\alpha = 0$,
    then
    \[
        {(\OmTsymbol^\structureA F)}^\alpha((\OmTsymbol^\structureA F)(a)) = (\OmTsymbol^\structureA F)(a) \displaypunctuation{,}
    \]
    and thus,
    since $P$ is closed under $\OmTsymbol^\structureA F$,
    ${(\OmTsymbol^\structureA F)}^\alpha((\OmTsymbol^\structureA F)(a)) \in P$.

    If there is an ordinal $\beta$
    such that $\alpha = \beta + 1$,
    then
    \[
        {(\OmTsymbol^\structureA F)}^\alpha((\OmTsymbol^\structureA F)(a)) = (\OmTsymbol^\structureA F)({(\OmTsymbol^\structureA F)}^\beta((\OmTsymbol^\structureA F)(a))) \displaypunctuation{.}
    \]
    By the induction hypothesis,
    ${(\OmTsymbol^\structureA F)}^\beta((\OmTsymbol^\structureA F)(a)) \in P$,
    and thus,
    since $P$ is closed under $\OmTsymbol^\structureA F$,
    ${(\OmTsymbol^\structureA F)}^\alpha((\OmTsymbol^\structureA F)(a)) \in P$.

    Otherwise,
    $\alpha$ is a limit ordinal,
    and thus,
    \[
        {(\OmTsymbol^\structureA F)}^\alpha((\OmTsymbol^\structureA F)(a)) = \LUBsymbol^\structureA \Set{{(\OmTsymbol^\structureA F)}^\beta((\OmTsymbol^\structureA F)(a))}{$\beta \in \alpha$} \displaypunctuation{.}
    \]
    By the induction hypothesis,
    \[
        \Set{{(\OmTsymbol^\structureA F)}^\beta((\OmTsymbol^\structureA F)(a))}{$\beta \in \alpha$} \subseteq P \displaypunctuation{,}
    \]
    and by Lemma~\ref{lem:monotone_operation}.\ref{enum:monotone_operation_2},
    $\pair{\Set{{(\OmTsymbol^\structureA F)}^\beta((\OmTsymbol^\structureA F)(a))}{$\beta \in \alpha$}}{\sqsubseteqsymbol^\structureA}$ is totally ordered.
    Thus,
    since $P$ is strictly inductive,
    ${(\OmTsymbol^\structureA F)}^\alpha((\OmTsymbol^\structureA F)(a)) \in P$.

    Therefore,
    by transfinite induction,
    for every ordinal $\alpha$,
    ${(\OmTsymbol^\structureA F)}^\alpha((\OmTsymbol^\structureA F)(a)) \in P$.

    By Theorem~\ref{thm:strictly_contracting_fixed_point_I},
    \[
        \fix F = {(\OmTsymbol^\structureA F)}^{\oh \structureA}((\OmTsymbol^\structureA F)(a)) \displaypunctuation{,}
    \]
    and thus,
    $\fix F \in P$.
\end{proof}

Theorem~\ref{thm:fixed-point_induction} is an induction principle that one may use to prove properties of fixed points of strictly contracting endofunctions. 
We think of properties extensionally here; 
that is, 
a property is a subset of $\carrier{\structureA}_\sortA$. 
And the properties that are admissible for use with this principle are those that are non-empty and strictly inductive. 
According to the principle, 
then, 
for every strictly contracting function $F$ on any directed-complete generalized ultrametric semilattice $\structureA$,
every non-empty, strictly inductive property that is preserved by $\OmTsymbol^\structureA F$ is true of $\fix F$.

We refer to \cite[sec.\,5.3]{Matsikoudis2013TFTOSCF} for a comparison between this principle with the fixed-point induction principle for order-preserving functions on complete partial orders
(see \cite{Scott1969ATOP}),
and the fixed-point induction principle for contraction mappings on complete metric spaces
(see \cite{Reed1986ATMFCSP}, \cite{Rounds1985AOTTSOCP}, \cite{Roscoe1991TCSATMOC}, \cite{Kozen2007AOMC}).



\bibliography{document}

\bibliographystyle{eptcs}


\end{document}
